\newtheorem{theorem}{Theorem}
\newtheorem{assumption}{Assumption}
\newtheorem{corollary}[theorem]{Corollary}
\newtheorem{lemma}{Lemma}
\newenvironment{proof}[1][Proof]{\noindent \textbf{#1.} }{\  \rule{0.5em}{0.5em}}
\DeclareMathOperator{\Var}{Var}
\def\plim{\mathop{\rm plim}}
\newcommand{\E}[1]{\mathbb{E}\left[#1\right]}
\DeclareMathOperator*{\diag}{diag}
\renewcommand*{\@fnsymbol}[1]{\ifcase#1\or *\or 1 \or 2\else\@arabic{\numexpr#1-1\relax}\fi}
\title{Detection Boundaries for Panel Slope Homogeneity Tests Under Small-Group Heterogeneity\footnote{The authors would like to thank Robin Lumsdaine for the helpful discussion and comments. The second author also gratefully acknowledges the financial support from Juan de la Cierva Incorporaci\'{o}n,  IJC2019-041742-I, Ramon y Cajal, RYC2023-044391-I and CEX 2021-001181-M financed by MICIU/AEI /10.13039/501100011033.}}
\author{Antonio Raiola\thanks{University of Rome Tor Vergata, Via Columbia 2, 00133 Rome, Italy. Email: antonio.raiola@uniroma2.it} \\ University of Rome Tor Vergata \and Nazarii Salish\thanks{Universidad Carlos III de Madrid, Calle Madrid 126, 28903 Getafe, Spain. Email: nsalish@eco.uc3m.es} \\ University Carlos III de Madrid}
\date{\today}
\begin{document}
% \doublespacing

\pagenumbering{arabic}
\thispagestyle{empty}
\maketitle

\begin{abstract}
  \noindent Empirical researchers often use slope-homogeneity tests to assess whether slopes can be treated as common across units.
  A key difficulty is that heterogeneity may be concentrated in a small number of units, so that a failure to reject homogeneity may reflect limited power rather than true homogeneity.
  We quantify this issue by analyzing the power of standard slope-homogeneity tests under doubly local alternatives — alternatives in which only small groups of units depart from the common slope and the magnitude of the deviations shrinks with sample size.
  We characterize detectability as a function of panel dimensions, the size of the departing groups, and the rate at which deviations shrink. The results tell the researcher clearly when homogeneity tests are informative and when they will miss small-group heterogeneity.
  A Monte Carlo study confirms the theory.
\end{abstract}

\textit{JEL classification}: C13, C23, C33\medskip

\textit{Keywords}: Panel Data; Detection Boundary; Slope Heterogeneity; Local Power; Group Structures.
%%%%%%%%%%%%%%%%%%%%%%%%%%%%%%%%%%%%%%%%%%%%%%%%%%%%%%%%%%%%%%%%%%%%%%%%%%%%%%%%%%%%%%%%%%%%%%%%%%%%%%%
\newpage

\section{Introduction}
\renewcommand{\baselinestretch}{1,3}\small\normalsize
Slope homogeneity is a convenient and widely used restriction in panel data analysis.
If slopes are common across units, pooling information over the cross section improves precision and yields estimators with a simple interpretation.
If slopes are heterogeneous, common-slope estimands may be difficult to interpret or even misleading (see, e.g., \citealp{BreitungSalish2021}). Empirical researchers therefore use slope-homogeneity tests to assess whether slopes are common and to choose between pooled and heterogeneous specifications.
A key difficulty is that heterogeneity may be concentrated in a small number of units, so that non-rejection may reflect limited power against such departures rather than genuine homogeneity. This paper quantifies the local power of standard slope-homogeneity tests against departures confined to small groups. Our results characterize, in terms of panel dimensions and group size, the conditions under which standard tests have power against such localized departures.

The literature on testing slope homogeneity is long-standing.
A seminal starting point is \citet{swamy1970efficient}, based on the dispersion of individual-slope estimates around a pooled estimator; refinements such as \citet{pesaran2008testing} remain among the most widely used tools for linear panel models.
In parallel, several LM-type procedures have been developed in random-coefficient models, where the null of slope homogeneity is equivalent to a zero-variance restriction on the random coefficients.
Building on \citet{breusch1979simple} (see also \citealp[Sec.~6.2.2.4]{hsiao2022analysis}), \citet{juhl2014test} propose LM statistics for zero variance, and \citet{breitung2016lagrange} further extend the methodology to empirically relevant settings.
Other LM tests assess whether the regressors have predictive power for the restricted residuals under the null of slope homogeneity: \citet{su2013testing} in models with interactive fixed effects, and \citet{lu2017determining}, whose test for the number of latent groups nests slope homogeneity as a special case.

We analyze the power of established test statistics—\citet{pesaran2008testing}, \citet{su2013testing}\footnote{In the setup considered here, the test of \citet{su2013testing} coincides with that of \citet{lu2017determining} when specialized to the single-group null; our results therefore cover both simultaneously.}, and the LM-type test of \citet{breitung2016lagrange}—under a sequence of \emph{doubly local alternatives} designed to capture small-group heterogeneity. These alternatives are doubly local in the sense that $(i)$ only the slopes of small groups of individuals deviate from the dominant common slope, and $(ii)$ the slope differences across groups shrink at a controlled (Pitman-type) rate. This design characterizes how power depends jointly on the group-specific slope differences and the size of the groups.
We develop the analysis for a benchmark setting with two groups (one dominant majority, one heterogeneous remainder), and for extensions with  multiple groups, with either a fixed or growing number of groups.
The analysis covers both large-$N$, large-$T$ panels and large-$N$, fixed-$T$ panels, where $T$ denotes the time dimension.

Our main results deliver a detection boundary by linking the rate at which slope differences vanish, $\gamma$, to the size of the heterogeneous group(s), $M$. In large-$N$, large-$T$ panels, this boundary is given by
\[
\gamma \asymp \frac{M^{1/2}T^{1/2}}{N^{1/4}},
\]
which nests the classical $N^{1/4}\sqrt{T}$ rate—obtained when $M/N \to c>0$—reported in \citet{pesaran2008testing}, \citet{su2013testing}, and \citet{breitung2016lagrange} under analogous Pitman-drift sequences. Importantly, this boundary shows that the tests remain informative even if $M/N \to 0$, provided $MT \to \infty$ and the boundary condition above holds, thereby enabling detection of small-groups alternatives.
When $T$ is fixed, detectability requires $M$ to be of order $\sqrt{N}$ or larger; below this threshold, heterogeneity is undetectable. Finally, when the number of heterogeneous groups increases with $N$, the tests retain nontrivial power and detectability depends on the total number of units outside the dominant majority rather than on the number of groups per se.

The results also speak to the literature on latent group structure in panel models, where such departures arise naturally—see, e.g., \citet{bonhomme2015grouped}, \citet{bester2016grouped}, \citet{su2016identifying}, \citet{lu2017determining}, and \citet{dzemski2024confidence} for identification, estimation, and inference for group membership and group-specific parameters. Our contribution is complementary, as we characterize when standard slope-homogeneity tests can detect departures consistent with latent group structure. This perspective also connects to recent work on inference when latent groups are not well separated; see \citet{AkgunOkui2025}.

The remainder of the paper is organized as follows.
Section~2 introduces the model and notation.
Section~3 reviews the test statistics.
Section~4 derives the local-power results under doubly local alternatives, and
Section~5 reports a Monte Carlo study.
Proofs are relegated to the appendix.

\section{Preliminaries}\label{sec:model}
Consider the linear panel data model
\begin{align} \label{eq:regmodel}
  y_{i} = x_{i}^{\prime}\beta_{g(i)} + \epsilon_{i}, \quad i=1,\ldots,N,
\end{align}
where $y_i=(y_{i1},\ldots,y_{iT})'$ is the $T$-dimensional vector of dependent variables for unit $i$; $x_i=(x_{i1},\ldots,x_{iT})'$ is the $T\times K$ design matrix of regressors for unit $i$; and $\epsilon_i=(\epsilon_{i1},\ldots,\epsilon_{iT})'$ is the corresponding vector of idiosyncratic errors. The slope parameter, $\beta_{g(i)}$, is group-specific, with unknown group membership determined by the mapping $g:\mathcal{N} \to \{G_1, \ldots, G_P\}$, where $\mathcal{N} = \{1, \ldots, N\}$. The function $g$ assigns unit $i$ to one of the $P$ groups partitioning the sample; that is, $G_p \cap G_q = \emptyset$ for all $p \neq q$, and $\bigcup_{p=1}^P G_p = \mathcal{N}$.

We are interested in testing the presence of group heterogeneity in the slopes by means of existing slope homogeneity tests. The null hypothesis of identical slopes is
\begin{align*}%\label{eq:nullswamy}
H_0: \beta_{g(i)} = \beta, \text{ for all }  i\in \mathcal{N},
\end{align*}
versus the alternative that there exists a group of individuals with different slopes,
\begin{equation*}
  H_1: \exists\ G \subset\{1,...,N\}:\;G\neq \emptyset\;\text{and}\;\beta_{g(i)}\neq\beta_{g(j)}\;\text{for}\;i\in G\;\text{and}\;j\in \mathcal{N}\backslash G.
\end{equation*}

As discussed in the introduction, a variety of well-established tests for slope homogeneity are available in the literature. To analyze their asymptotic behavior within a unified framework, we adopt a common set of assumptions sufficiently broad to capture their essential features. Throughout, $C$ denotes a generic constant that may vary across expressions, and $\#G$ denotes the cardinality of a generic set $G\subseteq \mathcal{N}$.
\begin{assumption}[Errors]\label{Ass1}\text{ }
$(a)$ $\epsilon_{it}|x_i\sim IID(0,\sigma^2_i)$, where $\max_i\sigma^2_i<C<\infty$, $\min_i\sigma^2_i>0$, with $\epsilon_{it}$ and $\epsilon_{js}$ indipendently distributed for $i\neq j$ and $t\neq s$; $(b)$ $\E{\epsilon _{it}^{6}|x_i}<C<\infty$.
\end{assumption}
\begin{assumption}[Regressors]\label{Ass2}\text{ }\\
For all $i\in \mathcal{N}$: $(a)$ $\mathbb{E}|x_{it,k}|^{6}<C<\infty $, for all $t=1,...,T$ and $k=1,...,K$; $(b)$ the probability limit $S_{i}=\plim_{T\to\infty}T^{-1} x_{i}'x_{i}$, is a positive definite, non-stochastic and finite matrix; (c) for any subset $G\subseteq\mathcal{N}$ such that $\#G\to\infty$, the probability limit $S^{(G)}=\plim_{N,T\to\infty}(\#GT)^{-1}\sum_{i\in G}x_{i}'x_{i}$ is a positive definite, non-stochastic and finite matrix.
\end{assumption}
In particular, taking $G=\mathcal{N}$, $S=S^{(\mathcal{N})}$ is a positive definite non-stochastic matrix.

Assumption \ref{Ass1}(a), corresponding to Assumption 1(i)--(ii) in \cite{pesaran2008testing}, rules out serial correlation, and cross-sectional dependence. Together with model \eqref{eq:regmodel}, which excludes endogenous components in the error term, they provide a simplified framework for presenting our main results in a unified manner. We emphasize, however, that the tests considered here have been developed or extended to accommodate more general error structures (see original papers as well as \citealt{blomquist2013testing}; \citealt{ando2015simple}). In all such cases, the main conclusions of our analysis remain valid.

We focus here on three representative and widely used slope homogeneity tests:
(i) the dispersion statistic $\Delta$ from \cite{pesaran2008testing} (hereafter, PY);
(ii) the residual-based LM test $\mathcal{J}$ from \cite{su2013testing} (SC); and
(iii) the LM-type variance test $LM$ from \cite{breitung2016lagrange} (BRS).
The next section provides the necessary details for each of the test statistics.

\section{Test Statistics}\label{sec:tests}
% \noindent\textbf{The dispersion statistic $\Delta$.}
\textbf{The $\Delta$ test} of PY refines the \cite{swamy1970efficient} test, originally designed for testing slope homogeneity in large $T$, fixed $N$, scenarios. Both tests compare individual slope estimates, $\hat{\beta}_i = (x_i'x_i)^{-1}x_i'y_i$, with the weighted least squares (WLS) estimator $\hat{\beta}_{WLS}(\bar{\sigma}_i^2) = \bigl(\sum_{i=1}^N x_i'x_i/\bar{\sigma}_i^2\bigr)^{-1}\bigl(\sum_{i=1}^N x_i'y_i/\bar{\sigma}_i^2\bigr)$, using some estimator $\bar{\sigma}^2_i$ of $\sigma^2_i$. The original Swamy statistic is $\mathcal{S}_{SW} = \mathcal{S}(\hat{\sigma}^2_i)$, where
\[
  \mathcal{S}(\bar{\sigma}^2_i) = \sum_{i=1}^N \left(\hat{\beta}_i-\hat{\beta}_{WLS}(\hat{\sigma}^2_i)\right)'
  \left(\frac{x_i'x_i}{\bar{\sigma}^2_i}\right)
  \left(\hat{\beta}_i-\hat{\beta}_{WLS}(\hat{\sigma}^2_i)\right),
\]
and $\hat{\sigma}^2_i = (T-K-1)^{-1}(y_{i} - x_{i}\hat{\beta}_i)'(y_{i} - x_{i}\hat{\beta}_i)$. This statistic has a $\chi^2$ null distribution with $K(N-1)$ degrees of freedom when $T\to\infty$ and $N$ is fixed.
For the large-$N$, large-$T$ framework, PY propose to replace $\hat{\sigma}^2_i$ with
\begin{equation*}%\label{PYvar}
  \widetilde{\sigma}^2_i = \frac{\hat{\epsilon}_i'\hat{\epsilon}_i}{T-K-1},
\end{equation*}
where $\hat{\epsilon}_i = (\hat{\epsilon}_{i1},...,\hat{\epsilon}_{iT})'$, $\hat{\epsilon}_{it} = y_{it} - x_{it}'\hat{\beta}_{LS}$, and $\hat{\beta}_{LS}=(\sum_{i=1}^N x_i'x_i)^{-1}\sum_{i=1}^N x_i'y_i$ - the pooled least square estimator.
The resulting dispersion statistic, $\mathcal{S}_{PY}=\mathcal{S}(\tilde{\sigma}^2_i)$, after centering and standardization, converges to the standard normal distribution under $H_0$,
\begin{equation}\label{eq:delta}
  \Delta = \frac{\mathcal{S}_{PY}-NK}{\sqrt{2NK}}\overset{d}{\rightarrow}\mathcal{N}(0,1), \quad\text{ as }\quad N,T\to\infty.
\end{equation}

\noindent\textbf{The residual-based $\mathcal{J}$ test.}
An alternative approach is the LM test of SC,\footnote{With additive fixed effects (or none, as in the present case), $\mathcal{J}$ is numerically equivalent to the test of \cite{lu2017determining} when applied to a single-group case under the null.} based on the idea that, under $H_0$ and with predetermined regressors, $x_{it}$ should not have explanatory power for the residuals in \eqref{eq:regmodel}. This motivates a test of $H_0^*: \phi_i=0\;\text{for all } i=1,\ldots,N$ in the auxiliary regression
\[
  \hat{\epsilon}_{it} = \phi_i'x_{it} + \eta_{it}, \qquad \text{with} \quad i=1,\ldots \,N, t=1,\ldots,T,
\]
where $\eta_{it}$ is a mean-zero error term.

Let $h_i=x_i(x_i'x_i)^{-1}x_i'$ be the projection matrix with diagonal elements $h_{i,tt}$,
and define $\hat{b}_{it}=\hat{S}_i^{-1/2}x_{it}$, where $\hat{S}_i = T^{-1}x_i'x_i$. The SC test statistic is given by the following centered and standardized $LM$ test statistic, which has a limiting normal distribution under $H_0^*$:
\begin{equation}\label{eq:SC}
  \mathcal{J}= \frac{N^{-1/2}LM_{SC}-\hat{B}_{NT}}{\sqrt{\hat{V}_{NT}}}\overset{d}{\rightarrow}\mathcal{N}(0,1), \quad N,T\to\infty,
\end{equation}
where
\begin{equation*}
  LM_{SC} = \sum_{i=1}^N\hat{\epsilon}_i'x_i(x_i'x_i)^{-1}x_i'\hat{\epsilon}_i,
\end{equation*}
and $ \hat{B}_{NT} = N^{-1/2}\sum_{i=1}^N\sum_{t=1}^T\hat{\epsilon}_{it}^2h_{i,tt}$ and $\hat{V}_{NT} = 4T^{-2}N^{-1}\sum_{i=1}^N\sum_{t=2}^T\left[\hat{\epsilon}_{it}\hat{b}_{it}'\sum_{s=1}^{t-1}\hat{b}_{is}\hat{\epsilon}_{is}\right]^2$ estimate the mean and variance of $LM_{SC}$ under the null, respectively.

\noindent\textbf{The LM variance test.} A third approach reformulates the null hypothesis within the random coefficients model. Specifically, assuming $\beta_i=\beta+v_i$ with $\mathbb{E}[v_iv_i'|x_i]=\text{diag}(\sigma^2_{v,1},\dots,\sigma^2_{v,K})$,
slope homogeneity is equivalent to $H_0^{**}: \sigma^2_{v,k}=0$, for all $k=1,\dots,K$, which
reduces the $N(K-1)$ linear restrictions tested by $\Delta$ and $\mathcal{J}$ to $K$.

Building on this idea, BRS develop an LM test valid for panels with large $N$ and either fixed $T$ or large $T$. The statistic is based on the score vector
$\hat{s}=(\hat{s}_1,\ldots,\hat{s}_K)'$, where
\begin{equation*}%\label{eq::score_BRS}
  \hat{s}_k = \sum_{i=1}^N\sum_{t=2}^T\sum_{s=1}^{t-1}\hat{\epsilon}_{it}\hat{\epsilon}_{is}x_{it,k}x_{is,k},\;\;k=1,...,K.
\end{equation*}
The test statistic is then given by
\begin{equation}\label{eq:LMtest}
  LM = \hat{s}'\hat{V}^{-1}\hat{s},
\end{equation}
where $\hat{V}$ has $(k,l)$-th element
\begin{equation*}%\label{eq::var_BRS}
  \hat{V}_{kl} = \sum_{i=1}^N \sum_{t=2}^T\sum_{s=1}^{t-1}\sum_{p=1}^{t-1}
  \hat{\epsilon}_{it}^2x_{it,k}x_{it,l}\, \hat{\epsilon}_{is}\hat{\epsilon}_{ip}\, x_{is,k}x_{ip,l} \quad \text{and} \quad l,k=1,...,K.
\end{equation*}
Under $H_0^{**}$ and $N\to\infty$, with $T$ either fixed or $T\to\infty$, the test has a $\chi^2_K$ limiting distribution.

The three tests, $\Delta$, $\mathcal{J}$, and $LM$, are consistent for $H_0$ and have well-established local power properties. However, none has been explicitly studied for local power against small-group heterogeneity, which is the focus of the next section.

\section{Asymptotic Local Power Analysis}\label{sec:Asymptotics}
We study the local power of $\Delta$, $\mathcal{J}$, and $LM$ tests under grouped alternatives, focusing on (i) how large slope differences must be to be detected and (ii) how large the groups must be for detection. To this end, we introduce a sequence of \textit{doubly local alternatives} that extends the classical Pitman framework and integrates these two aspects. We first present the results for the case where the sample is partitioned into two groups, $\{G_1,G_2\}$, with $N,T\to\infty$. Group $G_1$ is treated as the ``dominant'' group and $G_2$ as the ``alternative'' one, containing the remaining individuals. We then extend these results to panels with $N \to \infty$ and fixed $T$, as well as to settings with more than two groups.

Formally, define
\begin{eqnarray*}%\label{eq:localalt1}
  H_{1,n}: \beta_{g(i)} =
  \begin{cases}
    \beta,                 & \text{if } g(i)= G_1, \\
    \beta + \lambda_2/\gamma, & \text{if } g(i)= G_2,
  \end{cases}
\end{eqnarray*}
where $\lambda_2\neq0$ is a $K\times 1$ vector, and $\{G_1,G_2\}$ form a partition of $\mathcal{N}$. The number of individuals in each group is
\begin{equation*}%\label{eq:localalt2}
  \# G_2 = M, \qquad \# G_1 = N-M,
\end{equation*}
where $M\to\infty$ as $N\to\infty$ and $\gamma=\gamma(N,T,M)$ diverges with $N$, $T$, and $M$.

Allowing the scaling factor $\gamma$ to depend on $M$ as well as on $N$ and $T$ reflects that our proposed local alternatives control for both the slope difference and the size of the heterogeneous group. This extension is essential: the relative magnitude of $M$ with respect to $N$ and $T$ determines when slope homogeneity tests can detect group-based heterogeneity. Theorem \ref{theorem:Tlarge} formalizes this statement.

%In classical Pitman alternatives, $\gamma$ depends only on $N$ and $T$ and regulates the size of slope differences. In contrast, our sequence of alternatives allows $\gamma$ to depend also on $M$, the size of the alternative group. This extension is essential: as Theorem \ref{theorem:Tlarge} will show, the relative magnitude of $M$ with respect to $N$ and $T$ is a key determinant of when and why slope homogeneity tests can detect group-based heterogeneity. In this sense, our framework nests the classical Pitman drifts while adding a new dimension that captures the empirical relevance of heterogeneous groups in panels.
%We first present our main result for the case where $N,T\to\infty$ and the sample is partitioned into two groups, with $G_1$ treated as the ``dominant'' group and $G_2$ containing the remaining $M$ individuals.
%Theorem \ref{theorem:Tlarge} establishes the local asymptotic power of the three tests in this setting.
%In Sections 4.1 and 4.2 we extend these results to two empirically relevant cases: (i) panels with $N\to\infty$ and fixed $T$, and (ii) settings in which the number of groups grows with the sample size.
%The following theorem summarizes the asymptotic local power of the three tests under $H_{1,n}$.

\setcounter{theorem}{0}
\begin{theorem} \label{theorem:Tlarge}
 Under Assumptions \ref{Ass1}, \ref{Ass2}, and $H_{1,n}$, as $M,N,T\to\infty$ and $M/N\to {m_0} \in [0,1)$, the following holds:
\begin{description}
        \item[(PY test)] Additionally assume $\E{\epsilon _{it}^{9}|x_i}<C<\infty$ and $\sqrt{N}/T^2\to 0$, then,
              \begin{equation*}
                \Delta\overset{d}{\rightarrow }\begin{cases}
                  \mathcal{N}(0,1),     & \mbox{if }  \frac{\sqrt{MT}}{\gamma \, N^{1/4}} \to 0,      \\
                  \mathcal{N}(\delta_{\Delta},1), & \mbox{if }  \frac{\sqrt{MT}}{\gamma \, N^{1/4}} \to c>0     \\
                  \infty,               & \mbox{if }  \frac{\sqrt{MT}}{\gamma \, N^{1/4}} \to \infty.
                \end{cases},
              \end{equation*}
              where $${\delta}_{\Delta} = \frac{c^2}{\sqrt{2K}}\lambda_2'\left(Q^{(G_2)}-m_0Q^{(G_2)}Q^{-1}Q^{(G_2)}\right)\lambda_2,$$
              with $Q^{(G)}=\plim\limits_{\#G,T\to\infty}(\#GT)^{-1}\sum_{i\in G}x_{i}'x_{i}/\sigma^2_i$, and $Q=Q^{(\mathcal{N})}$.
        \item[(SC test)]
             \begin{equation*}
                \mathcal{J}\overset{d}{\rightarrow }\begin{cases}
                  \mathcal{N}(0,1),  & \mbox{if }   \frac{\sqrt{MT}}{\gamma \, N^{1/4}} \to 0,      \\
                  \mathcal{N}(\delta_{\mathcal{J}},1), & \mbox{if }  \frac{\sqrt{MT}}{\gamma \, N^{1/4}} \to c>0     \\
                  \infty,            & \mbox{if }  \frac{\sqrt{MT}}{\gamma \, N^{1/4}} \to \infty.
                \end{cases},
              \end{equation*}
              where $${\delta}_{\mathcal{J}} = \frac{c^2}{\sqrt{V_0}}\lambda_2'\left(W^{(G_2)}+m_0(S^{(G_2)}S^{-1}WS^{-1}S^{(G_2)} - 2W^{(G_2)}S^{-1}S^{(G_2)})\right)\lambda_2,$$
              with $W^{(G)}=\plim\limits_{\#G,T\to\infty}(\#GT)^{-1}\sum_{i\in G}\sum_{t=1}^{T}{x_{it}x_{it}'}(1-h_{i,tt})$, and $W=W^{(\mathcal{N})}$. Here $V_{0} = \lim_{N,T\to\infty} 4T^{-2}N^{-1}\sum_{i=1}^N\sum_{t=2}^T\E{\left(\epsilon_{it}b_{it}'\sum_{s=1}^{t-1}b_{is}\epsilon_{is}\right)^2}$, with $b_{it}=S^{-1/2}_ix_{it}$.
        \item[(BRS test)] \begin{equation*}
                LM\overset{d}{\rightarrow }\begin{cases}
                  \chi _{K}^{2},                & \mbox{if } \frac{\sqrt{MT}}{\gamma \, N^{1/4}} \to 0,      \\
                  \chi _{K}^{2}\left(\delta_{LM}'\Psi^{-1}\delta_{LM} \right), & \mbox{if }  \frac{\sqrt{MT}}{\gamma \, N^{1/4}} \to c>0     \\
                  \infty,                       & \mbox{if }  \frac{\sqrt{MT}}{\gamma \, N^{1/4}} \to \infty.
                \end{cases},
              \end{equation*}
              where $\delta_{LM}$ is a $K\times1$ vector, with $k$-th element
              \[
                \delta_{LM,k} = c^2\left(\lambda_2^{\prime}\Omega_k^{(G_2)} \lambda_2 + m_0\lambda_2'\left(S^{(G_2)}S^{-1}\Omega_kS^{-1}S^{(G_2)} - 2S^{(G_2)}S^{-1}\Omega_{k}^{(G_2)}\right)\lambda_2\right)
              \]
              with $\Omega_{k}^{(G)} = \plim_{\#G,T \rightarrow \infty} \; \frac{1}{\#GT^2} \sum_{i\in G} \sum_{s<t}x_{it,k}x_{it} x_{is,k}x_{is}'$,
              and $\Omega_{k}=\Omega_{k}^{(\mathcal{N})}$. Here, ${\Psi}$ is a $K \times K$ matrix with ${\Psi} _{k,l} = \underset{N,T \rightarrow \infty}{\text{plim}} \; (NT^2)^{-1} \sum_{i=1}^{N} \sum_{s<t}\sigma^4_i x_{it,k}x_{it,l} x_{is,k}x_{is,l}$.

      \end{description}

\end{theorem}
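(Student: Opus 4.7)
The plan is to treat the three tests in parallel via a common decomposition: under $H_{1,n}$ each statistic equals its null-distribution value plus a deterministic drift induced by the group structure plus stochastic cross terms, and the three rate regimes in the theorem correspond to the drift being negligible, of order one, or divergent relative to the null standardization. The common first step is to quantify the bias of the pooled estimators. Writing $y_{i} = x_{i}\beta + x_{i}(\beta_{g(i)} - \beta) + \epsilon_{i}$ and using Assumption \ref{Ass2}, one obtains
\begin{equation*}
\hat{\beta}_{LS} - \beta = \frac{M}{N\gamma}\, S^{-1} S^{(G_2)} \lambda_2 + O_p(1/\sqrt{NT}),
\end{equation*}
with an analogous expansion for $\hat{\beta}_{WLS}(\tilde{\sigma}^2_i) - \beta$ in which $S$ is replaced by $Q$ (using $\tilde{\sigma}^2_i = \sigma^2_i + O_p(1/\gamma^2) + O_p(1/\sqrt{T})$ to linearize the weights). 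Consequently $\theta_i := \beta_{g(i)} - \hat{\beta}_{LS}$ equals $-(M/(N\gamma))\,S^{-1} S^{(G_2)} \lambda_2$ on $G_1$ and $\lambda_2/\gamma - (M/(N\gamma))\,S^{-1} S^{(G_2)} \lambda_2$ on $G_2$, up to $O_p(1/\sqrt{NT})$. The alternative then enters $\mathcal{J}$ and $LM$ through $\hat{\epsilon}_i = \epsilon_i + x_i\theta_i$, and enters $\Delta$ through $\hat{\beta}_i - \hat{\beta}_{WLS} = (\beta_{g(i)} - \hat{\beta}_{WLS}) + (x_i' x_i)^{-1} x_i'\epsilon_i$.

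For the SC statistic I would substitute $\hat{\epsilon}_i = \epsilon_i + x_i\theta_i$ into both $LM_{SC}$ and the centering term $\hat{B}_{NT}$. Because $h_i x_i = x_i$, the deterministic contribution to $LM_{SC}$ is $\sum_i\theta_i' x_i'x_i\theta_i$, while the deterministic contribution to $\hat{B}_{NT}$ is $\sum_i\theta_i'(\sum_t x_{it}x_{it}' h_{i,tt})\theta_i$; their difference equals $\sum_i\theta_i'(\sum_t x_{it}x_{it}'(1-h_{i,tt}))\theta_i$, which is exactly how the $W^{(G)}$ matrices emerge. Splitting this sum over $G_1$ and $G_2$, inserting the two expressions for $\theta_i$, and applying the population identity $(1-M/N)W^{(G_1)} + (M/N) W^{(G_2)} = W$ collapses the $G_1$ piece into a term proportional to $W$, leaving exactly the bracket of $\delta_{\mathcal{J}}$ once normalized by $N^{-1/2}$ and $\sqrt{\hat{V}_{NT}} \to \sqrt{V_0}$ and the boundary rate $MT/(\gamma^2 N^{1/2}) \to c^2$ is used. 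The null stochastic part retains the CLT of \citet{su2013testing} because $\theta_i = O(1/\gamma) = o(1)$ leaves $\hat{V}_{NT}$ consistent for $V_0$ at leading order. The same recipe yields $\delta_{\Delta}$: expanding $\mathcal{S}_{PY}$ around the deterministic piece $(\beta_{g(i)} - \hat{\beta}_{WLS})$ and using $(1-M/N)Q^{(G_1)} + (M/N) Q^{(G_2)} = Q$ gives the drift $(MT/\gamma^2)\lambda_2'(Q^{(G_2)} - m_0 Q^{(G_2)}Q^{-1}Q^{(G_2)})\lambda_2$, which after division by $\sqrt{2NK}$ equals $\delta_{\Delta}$. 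For BRS each score coordinate $\hat{s}_k$ splits as an $O_p(\sqrt{N}T)$ null U-statistic and a deterministic drift of order $MT^2/\gamma^2$ (involving $\Omega_k^{(G)}$), obtained by the same $(1-M/N)\Omega_k^{(G_1)} + (M/N)\Omega_k^{(G_2)} = \Omega_k$ collapse; the joint CLT of \citet{breitung2016lagrange}, combined with $\hat{V} = NT^2(\Psi + o_p(1))$ and the continuous mapping theorem, then delivers the non-central $\chi^2_K$ with non-centrality $\delta_{LM}'\Psi^{-1}\delta_{LM}$.

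The main technical obstacle is the control of the stochastic cross terms at the boundary rate: the scalar pieces $N^{-1/2}\sum_i\theta_i' x_i'\epsilon_i$ for $\mathcal{J}$, the analogous $\theta$-by-$\epsilon$ pieces inside $\mathcal{S}_{PY}$ and $\hat{s}_k$, and the $\epsilon$-by-$\theta$ cross terms inside the U-statistic variance estimators $\hat{V}_{NT}$ and $\hat{V}$ must each be shown to be $o_p(1)$ after standardization. These bounds rely on Assumption \ref{Ass1}(b), on the extra ninth-moment and $\sqrt{N}/T^2 \to 0$ conditions needed to linearize $\tilde{\sigma}^2_i$ in the PY case (so that its sample-based bias does not propagate into the leading drift), and on the uniform bound $\|\theta_i\| = O(1/\gamma)$. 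A secondary obstacle is verifying that the scale estimators $\tilde{\sigma}^2_i$, $\hat{V}_{NT}$, and $\hat{V}$ retain their null probability limits under $H_{1,n}$, which holds because contamination enters only through $O(1/\gamma^2)$ squares of $\theta_i$ and is dominated by the null stochastic order. Divergence in the third regime of each test follows by Slutsky, once the drift is shown to dominate the normalized null stochastic term.
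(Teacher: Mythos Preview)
Your proposal is correct and follows essentially the same strategy as the paper: a Lemma~1--type expansion of $\hat{\beta}_{LS}-\beta$, a decomposition of each statistic into a null CLT piece, a quadratic drift, and $\epsilon$--$\lambda$ cross terms of order $O_p(\sqrt{MT}/(\gamma\sqrt{N}))=o_p(1)$ at the boundary, together with consistency of the scale/variance estimators under $H_{1,n}$. The only organizational difference is that you bundle the deviation into $\theta_i=\beta_{g(i)}-\hat{\beta}_{LS}$ and then use the weighted-average identity $(1-M/N)A^{(G_1)}+(M/N)A^{(G_2)}=A$ to collapse the $G_1$ contribution, whereas the paper keeps $\lambda_i$ and $(\hat{\beta}_{LS}-\beta)$ separate so that the squared-bias sum is over the full sample from the outset and no collapse is needed; the resulting noncentrality parameters and remainder rates coincide.
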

Theorem \ref{theorem:Tlarge} highlights a trade-off between the rate at which slope differences vanish and the size of the alternative group. The critical rate,
\begin{equation}\label{eq:gammarate}
  \gamma \asymp \frac{M^{1/2}T^{1/2}}{N^{1/4}}, \qquad M,N,T\to\infty,
\end{equation}
marks the boundary between trivial and nontrivial power. If $\gamma$ grows faster than this rate, all three tests have trivial power. If it grows at or below this rate, they have power to detect the grouped alternative.

Two cases deserve attention. If the fraction of alternative units remains bounded away from zero ($m_0 > 0$, so $M \asymp N$), the rate \eqref{eq:gammarate} reduces to the classical order $O(\sqrt{T}N^{1/4})$, consistent with the local power results of PY, SC, and BRS. When the alternative group is small ($m_0=0$), the rate is slower than $O(\sqrt{T}N^{1/4})$, yet the tests retain nontrivial power as long as $M$ diverges with $N$ — even as the fraction of units with distinct slopes vanishes. This contrasts with the prevailing assumption in the grouped-patterns literature, which requires heterogeneous slopes to affect a non-negligible share of the sample \citep[e.g.,][]{bonhomme2015grouped,su2016identifying,lu2017determining}.

\paragraph{\textbf{On the noncentrality parameters.}} Theorem \ref{theorem:Tlarge} shows that all three tests share the same asymptotic detection boundary \eqref{eq:gammarate}. Their local power need not coincide, however, since the noncentrality parameters differ. A general comparison is intractable, as the parameters depend on both model and design in complex ways. We therefore consider the benchmark case of a single regressor, $K=1$, to build intuition.

Let $x_{it}$ be scalar and i.i.d.\ across $i$ and $t$, with $\E{x_{it}}=0$, $\E{x_{it}^2}=\sigma_{i,x}^2$, and uniformly bounded fourth moments. The regressor may thus exhibit unit-specific variance, constant over time. For any subset $G\subseteq \mathcal N$, define
\[
\mu_{2,x}^{(G)}=\lim_{\#G\to\infty}\frac{1}{\#G}\sum_{i\in G}\sigma_{i,x}^2,\quad
\mu_{4,x}^{(G)}=\lim_{\#G\to\infty}\frac{1}{\#G}\sum_{i\in G}\bigl(\sigma_{i,x}^2\bigr)^2, \quad q_x^{(G)}=\lim_{\#G\to\infty}\frac{1}{\#G}\sum_{i\in G}\frac{\sigma_{i,x}^2}{\sigma_i^2},
\]
with $\mu_{2,x}=\mu_{2,x}^{(\mathcal{N})}$, $\mu_{4,x}=\mu_{4,x}^{(\mathcal{N})}$, and $q_x=q_x^{(\mathcal{N})}$.
Now suppose $m_0=0$, so $M/N\to 0$ and the heterogeneous subset is asymptotically negligible. The second term in each noncentrality parameter vanishes, and Theorem \ref{theorem:Tlarge} reduces to
\[
\delta_{\Delta}=\frac{c^2\lambda_2^2}{\sqrt{2}}\,q_x^{(G_2)},
\qquad
\delta_{\mathcal J}=\frac{c^2\lambda_2^2}{\sqrt{V_0}}\,\mu_{2,x}^{(G_2)},
\]
and
\[
\xi_{LM}=\delta_{LM}'\Psi^{-1}\delta_{LM}
=\frac{c^4\lambda_2^4}{4\Psi}\,\left(\mu_{4,x}^{(G_2)}\right)^2,
\]
where $ V_0 = 2\lim_{N\to\infty}N^{-1}\sum_{i=1}^N \sigma_i^4$ and $\Psi = \lim_{N\to\infty}(2N)^{-1}\sum_{i=1}^N \sigma_i^4(\sigma_{i,x}^2)^2$. Even in the scalar case, PY and SC depend on different weighted second moments, while BRS depends on fourth-order cross-sectional moments of the regressor.

To compare PY and SC directly, note that
\[
\frac{\delta_{\Delta}}{\delta_{\mathcal{J}}} = \lim_{N\to\infty} \mathrm{RMS}(\sigma^2_i) \sum_{i\in G_2}\frac{\omega_{x,i}}{\sigma^2_i}\;
\]
where $\omega_{x,i}=\sigma^2_{x,i}/\sum_{i\in G_2}\sigma^2_{x,i}$, and $\mathrm{RMS}(\sigma^2_i)=\sqrt{N^{-1}\sum_{i=1}^{n}\sigma^4_i}$. It follows that,
\[
\lim_{N\to\infty}\frac{\mathrm{RMS}(\sigma^2_i)}{\max_{i\in G_2}\sigma^2_i} \leq \frac{\delta_{\Delta}}{\delta_{\mathcal{J}}} \leq \lim_{N\to\infty} \frac{\mathrm{RMS}(\sigma^2_i)}{\min_{i\in G_2}\sigma^2_i}.
\]
When $G_2$ concentrates on units with small $\sigma^2_i$, the $\Delta$ test dominates; otherwise the $\mathcal{J}$ test has higher local power. Under homoskedasticity ($\sigma_i^2=\sigma^2$ for all $i$), $\delta_{\Delta} = \delta_{\mathcal J}$ and the two tests are locally equivalent.

Comparing PY and SC with BRS is more complex. Under homoskedasticity and constant regressor variance ($\sigma_{i,x}^2=\sigma_x^2$ for all $i$), all three procedures are locally equivalent: $\xi_{LM}/\delta_{\Delta}^2 = \xi_{LM}/\delta_{\mathcal J}^2 =1$. If errors are homoskedastic and $\{\sigma_{x,i}^2\}$ unrestricted with $\mu_{2,x}^{(G_2)}=\mu_{2,x}$, and $\mu_{4,x}^{(G_2)}=\mu_{4,x}$, then
\[
\frac{\xi_{LM}}{\delta_{\Delta}^2} = \frac{\mu_{4,x}}{\mu_{2,x}^2} \geq 1.
\]
Hence, BRS has a larger local noncentrality parameter than PY and SC, with the ratio $\xi_{LM}/\delta_\Delta^2$ governed by the cross-sectional dispersion in regressor variance.

A complementary comparison is obtained when regressor variance is constant across units but errors are heteroskedastic. Then, $\delta_{\mathcal J}^2 = \xi_{LM}$, so SC and BRS tests are locally equivalent. If $G_2$ is also asymptotically representative with respect to error variances — that is $\lim_{M\to\infty}M^{-1}\sum_{i\in G_2}\sigma_i^{-2} = \lim_{N\to\infty}N^{-1}\sum_{i=1}^n\sigma_i^{-2}$ — then
\[
\frac{\xi_{LM}}{\delta_{\Delta}^2}
=
\left(\frac{\delta_{\mathcal J}}{\delta_{\Delta}}\right)^2
=
\lim_{N\to\infty} \frac{1}
{\left(N^{-1}\sum_{i=1}^N \sigma_i^4\right)
\left(N^{-1}\sum_{i=1}^N \sigma_i^{-2}\right)^2}
\leq 1,
\]
with equality under homoskedasticity. When regressor variance is constant, error heteroskedasticity thus favors PY over SC and BRS

When $m_0>0$ (i.e., $M \asymp N$), the comparison is more complex as the second term in each expression of noncentrality parameters is no longer asymptotically negligible.
However, the analysis of the same scalar benchmark points in the same direction.
In particular, under homoskedasticity, $\delta_{\Delta} = \delta_{\mathcal J}$; if, in addition, the regressor variance is constant, all three tests are locally equivalent; if, instead, regressor variance is unrestricted and $\mu_{2,x}^{(G_2)}=\mu_{2,x}$, and $\mu_{4,x}^{(G_2)}=\mu_{4,x}$,
\[
\frac{\xi_{LM}}{\delta_{\Delta}^2} = \frac{\mu_{4,x}}{\mu_{2,x}^2} \geq 1.
\]
Outside these benchmarks, the comparison depends on the joint distribution of $\{\sigma^2_i\}$ and $\{\sigma^2_{x,i}\}$ across the full sample and the remainder group $G_2$.

\subsection{Panels with $N\to\infty$ and $T$-fixed}
When $T$ is fixed and $N$ grows, the LM-type test of BRS is the only one available. The restrictions on the regressors can be relaxed, as in Assumption \ref{Ass3} in the Appendix, which also contains the corresponding version of Theorem \ref{theorem:Tlarge}. In this scenario, the boundary between trivial and nontrivial power is governed by the rate
\begin{equation}\label{eq:gammarate2}
  \gamma \asymp \frac{\sqrt{M}}{N^{1/4}}, \qquad M,N \to \infty,
\end{equation}
the analogue of \eqref{eq:gammarate} in this setting.

When the alternative group is proportional to the total cross-section ($m_0 > 0$, or $M \!\asymp\! N$), the rate in \eqref{eq:gammarate2} reduces to $\gamma \!\asymp\! N^{1/4}$, matching the local-power behavior established in BRS. More interestingly, this rate marks the boundary at which the test begins to exhibit power. If $M$ diverges at least as fast as $N^{1/2}$, the $LM$ test retains nontrivial local power even when $M/N \!\to\! 0$. Essentially, the minimal detectable group size is of order $N^{1/2}$: below this threshold, grouped heterogeneity is undetectable.

\subsection{Panels with increasing number of groups}\label{sec:Plarge}
The main message of our analysis remain valid even if the number of groups is allowed to grow with $N$. Consider
\begin{eqnarray*}\label{eq:P_alternative}
  H_{2,n}: \beta_{g(i)} =
  \begin{cases}
    \beta,              & \mbox{if }  g(i)= G_1,                       \\
    \beta+\lambda_p/\gamma, & \mbox{if } g(i)= G_p \mbox{ and } p=2,...,P,
  \end{cases}
\end{eqnarray*}
where $\{G_1,...,G_P\}$ forms a partition of $\mathcal{N}$, with $\lambda_p\neq\lambda_q$ for any $p,q\in\{1,...,P\}$ such that $p\neq q$. We let $M_{p} = \#G_p$ and $\mathbb{M} = \sum_{p=2}^P M_{p}$.

Analyzing the local power when $P \to \infty$ as $N \to \infty$ is complicated because the size of each group $\{G_2, \dots, G_P\}$ can also depend on $N$. Both the number of groups and the number of individuals per group determine the expansion rates under which the test statistics exhibit nontrivial power.

\setcounter{theorem}{0}
\begin{corollary}\label{Cor}
  Let Assumptions \ref{Ass1}, \ref{Ass2}, and the sequence of local alternatives $H_{2,n}$ hold. If $\mathbb{M}\to\infty$ as $N\to\infty$, then the result of Theorem 1 holds true by replacing $M$ with $\mathbb{M}$, and the non-centrality parameters, $\delta_{\Delta}$, $\delta_{\mathcal{J}}$, and $\delta_{LM,k}$ with,
  \begin{equation*}
    \begin{aligned}
      &\lim_{N,T\to\infty} \frac{1}{\sqrt{2K}}\left(\sum_{p=2}^{P}c_p^2\lambda_p'Q^{(G_p)}\lambda_p - \sum_{p,q=2}^{P}m_pc_q^2\lambda_p\left(Q^{(G_p)}Q^{-1}Q^{(G_q)}\right)\lambda_q\right),\\
      &\lim_{N,T\to\infty} \frac{1}{\sqrt{V_0}}\left(\sum_{p=2}^{P}c_p^2\lambda_p'W^{(G_p)}\lambda_p + \sum_{p,q=2}^{P}m_pc_q^2\lambda_p\left[S^{(G_p)}S^{-1}WS^{-1}S^{(G_q)} - 2W^{(G_p)}S^{-1}S^{(G_q)}\right]\lambda_q\right)\\
      \end{aligned}
  \end{equation*}
  and,
    \begin{equation*}
    \begin{aligned}
        &\lim_{N,T\to\infty} \sum_{p=2}^{P}c_p^2\lambda_p'U_k^{(G_p)}\lambda_p + \sum_{p,q=2}^{P}m_pc_q^2\lambda_p'\left(S^{(G_p)}S^{-1}\Omega_kS^{-1}S^{(G_q)} - 2S^{(G_p)}S^{-1}\Omega_{k}^{(G_q)}\right)\lambda_q,
      \end{aligned}
  \end{equation*}
  respectively, where for all $p=2,...,P$, $M_{p}/N\to m_p\in [0,1)$ and $M_{p}T/(\gamma^2\sqrt{N})\to c_p^2\in[0,c^2]$, with $c= \mathbb{M}T/(\sqrt{N}\gamma^2)$.
\end{corollary}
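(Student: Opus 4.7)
The plan is to follow the three arguments used to prove Theorem~\ref{theorem:Tlarge} and absorb the extra bookkeeping that arises when the drift takes $P-1\geq 1$ non-zero values instead of a single one. For each statistic $T_N\in\{\Delta,\mathcal{J},LM\}$, I would write $T_N = T_N^{(0)} + D_N + R_N$, where $T_N^{(0)}$ has the same null limit as in Theorem~\ref{theorem:Tlarge} (standard normal or central $\chi^2_K$), $D_N$ is the deterministic drift induced by the piecewise perturbation $\beta_{g(i)}-\beta = \gamma^{-1}\sum_{p=2}^P\lambda_p\mathbf{1}\{i\in G_p\}$, and $R_N$ is a lower-order remainder. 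Since $T_N^{(0)}$ and the variance normalizations in $\mathcal{J}$ and $LM$ depend only on pooled, partition-free quantities, their limits are unaffected by the finer grouping, so the null analysis from Theorem~\ref{theorem:Tlarge} can be reused verbatim.

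The main computational step is to expand the drift. Under $H_{2,n}$, $\hat\beta_{LS}-\beta = \gamma^{-1}\sum_{p=2}^P m_p\, S^{-1}S^{(G_p)}\lambda_p + o_p(\cdot)$ and analogously for $\hat\beta_{WLS}$ with $S^{(G_p)}$ replaced by $Q^{(G_p)}$; the SC auxiliary residuals inherit the corresponding piecewise bias. Substituting these into the quadratic-form part of $\Delta$, $\mathcal{J}$, and $LM$ produces three types of contributions in the $\lambda_p$'s: (i) diagonal own-group terms $\lambda_p'(\cdot)^{(G_p)}\lambda_p$ with weight $M_pT/(\gamma^2\sqrt N)\to c_p^2$; (ii) cross-group bilinear terms $\lambda_p'(\cdot)\lambda_q$ with $p\neq q$, weighted by $m_pc_q^2$; and (iii) own-times-mean terms that, once combined with (ii), collapse into the single double sum $\sum_{p,q=2}^P m_pc_q^2\,\lambda_p'[\cdot]\lambda_q$ displayed in the statement. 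At $P=2$ this reproduces $\delta_\Delta$, $\delta_\mathcal{J}$, and $\delta_{LM,k}$ of Theorem~\ref{theorem:Tlarge}, providing a consistency check on the algebra.

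The main obstacle is the case $P=P_N\to\infty$, where the non-centrality parameters become (possibly infinite) sums over $p$ and $(p,q)$, so the exchange of limit and sum must be justified. I would truncate at a large but fixed $P_0$ and control the tail using the envelope $\sum_{p\geq P_0+1}c_p^2\leq c - \sum_{p=2}^{P_0}c_p^2$ together with the uniform-in-subset boundedness of $Q^{(G_p)}$, $S^{(G_p)}$, $W^{(G_p)}$, and $\Omega_k^{(G_p)}$ implied by Assumption~\ref{Ass2}(c) (applied to each $G_p$ with $\#G_p\to\infty$, and by direct moment bounds otherwise), then let $P_0\to\infty$. Once the drift is pinned down, showing $R_N=o_p(1)$ reduces to the same sixth-moment bounds on $x_{it}$ and $\epsilon_{it}$ used for Theorem~\ref{theorem:Tlarge}, applied group by group and summed under the budget constraint $\mathbb{M}T/(\gamma^2\sqrt N)\to c$, which prevents the growing number of groups from inflating the remainder.
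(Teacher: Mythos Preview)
Your proposal is correct and follows the same approach as the paper: reuse the Theorem~\ref{theorem:Tlarge} decomposition, substitute the piecewise drift $\lambda_i=\gamma^{-1}\sum_{p=2}^P\lambda_p\mathbf{1}\{i\in G_p\}$, and collect the resulting diagonal and cross-group quadratic terms, with the null and variance pieces carried over unchanged. Your truncation argument for the $P\to\infty$ case is in fact more explicit than the paper's own remarks, which simply bound the aggregate drift by $O_p(\mathbb{M}T/(\sqrt N\gamma^2))$ and leave the existence of the limiting sums implicit.
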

Corollary \ref{Cor} generalizes Theorem \ref{theorem:Tlarge}. If $P$ is fixed and $M^\ast = \max_{2 \le p \le P} M_p \to \infty$ as $N \to \infty$, replacing $M$ with $M^\ast$ yields Theorem \ref{theorem:Tlarge}. If $P \to \infty$ while $\sup_{2 \le p \le P} M_p = \kappa < \infty$, the expansion rate satisfies $PT / (\sqrt{N} \gamma^2) \to c > 0$ as $N, T \to \infty$. More generally, when both $P \to \infty$ and $M_p \to \infty$ for at least one $p$, the expansion rate is determined by the total ``volume'' of individuals in $\{G_2, \dots, G_P\}$, denoted as $\mathbb{M}$.

\section{Monte Carlo Study}\label{sec:mc}
We evaluate the finite-sample performance of the tests by comparing their power and assessing the theoretical results presented in the main text. We consider the three test statistics ($\Delta$, $\mathcal{J}$, and $LM$) under a data-generating process adapted from PY to accommodate grouped slope patterns; i.e., for each $i=1,\ldots,N$ and $t=1,\ldots,T$:
\begin{align*}
  y_{it} &= x_{it}'\beta_{g(i)} + \alpha_i + \epsilon_{it}, \\
  x_{it,k} &= \rho_{i,k} x_{it-1,k} + \sqrt{1 - \rho_{i,k}^2}\, v_{it,k},
  \qquad k=1,\ldots,K,
\end{align*}
where $x_{it}=(x_{it,1},\ldots,x_{it,K})'$, $\epsilon_{it}\overset{iid}{\sim}\mathcal{N}(0,1)$, $v_{it}\overset{iid}{\sim}\mathcal{N}(0,\sigma_{ix}^2)$, $\alpha_i\overset{iid}{\sim}\mathcal{N}(1,1)$, and $\rho_i\overset{iid}{\sim}\mathcal{U}(0.05,0.95)$, with $\sigma_{ix}^2\sim \chi^2_1$. The first 50 observations of $x_{it}$ ($t=-49,\ldots,0$) are discarded to mitigate the effect of initial values. Group-specific slopes, $\beta_{g(i)}=(\beta_{g(i),1},...,\beta_{g(i),K})'$, are generated as:
\[
  \beta_{g(i),k} =
  \begin{cases}
    1, & \text{if } g(i) = G_1, \\
    1 + \lambda(1+e_p), & \text{if } g(i) = G_p,
  \end{cases}
\]
for $k=1,\ldots,K$, where $p=2,\ldots,P$ and
$e_p\overset{iid}{\sim}\mathcal{U}(-h,h)$.
The dominant group $G_1$ contains $N-\sum_{p=2}^{P}M_p$ individuals, while the remaining $P-1$ groups $\{G_2,\ldots,G_P\}$ represent alternative groups of size $M_p$. The parameters $\lambda$ and $h$ control, respectively, the degree of heterogeneity relative to the dominant group and the dispersion across alternative groups.

We consider two simulation designs to examine how the size of the alternative group and the magnitude of slope differences affect power:
\begin{itemize}
  \item[(1)] \textbf{Two-group design:} $P=2$, $h=0$, varying the size of the alternative group $M_2$ for $\lambda\in\{0.1,0.2,0.3\}$.
  \item[(2)] \textbf{Multiple-group design:} $P\in\{3,5,10\}$, where alternative groups size are randomly assigned subject to $\sum_{p=2}^{P}M_p = \mathbb{M}$, with varying $\mathbb{M}$. Here, $\lambda=0.2$ and $h=0.2$, focusing on the effect of the \emph{overall mass of alternative groups} $\mathbb{M}$.
\end{itemize}
Each design is run at $(N,T) = (100,10)$ and $(100,100)$, corresponding to short and long panels, and at $K=1$ and $K=4$, corresponding to single and multiple regression.

In both designs, the $\mathcal{J}$ test is severely biased when $T$ is small, especially for $K=4$, and its rejection rate far exceeds that of the other two tests. To compare the three tests on equal footing, we use size-adjusted critical values obtained by simulating the null distribution of each test with $R_2=10000$ replications.

The data-generating process includes individual effects, which are removed through appropriate demeaning procedures for each test statistic. For the $LM$ test, we employ orthogonal deviations as in \cite{arellano1995another}, following Remark~5.1 in BRS. For the $\Delta$ and $\mathcal{J}$ tests, we apply the standard within transformation, pre-multiplying $y_i$ and $x_i$ by $M_0=I_T-T^{-1}\mathbf{i}_T\mathbf{i}_T'$, where $\mathbf{i}_T$ is a $T\times1$ vector of ones, as suggested in PY and \cite{lu2017determining}.

\begin{figure}[tbp]
  \centering
  \textbf{\large Two Groups: $K=1$, $N=100$, $T=10$} \\[0.3cm]
  \begin{subfigure}[b]{1\textwidth}
    \centering
    \includegraphics[scale=0.55]{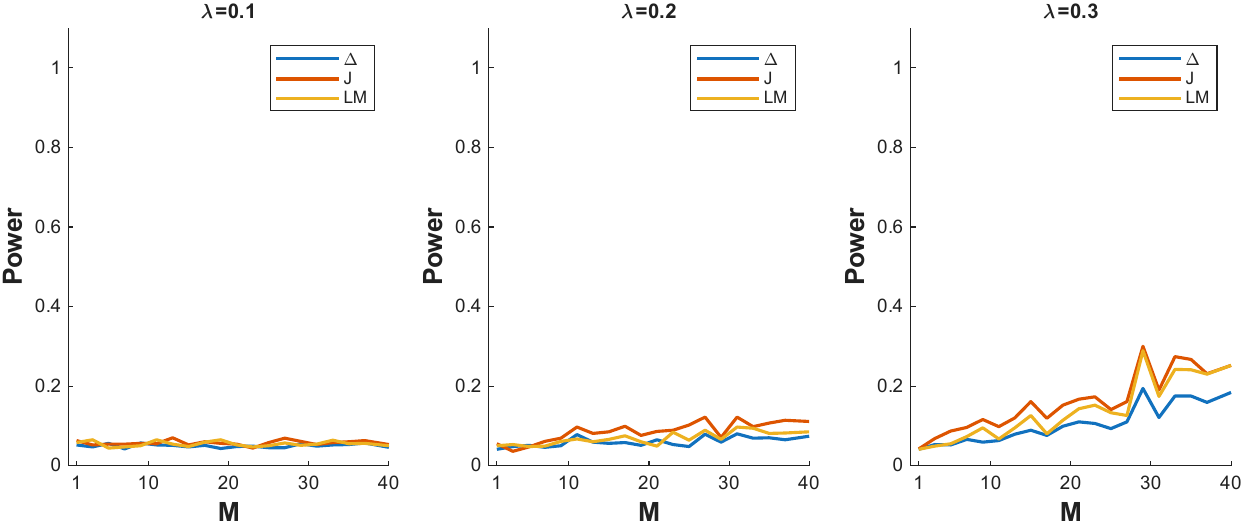}
  \end{subfigure}
  \textbf{\large Two Groups: $K=1$, $N=100$, $T=100$} \\[0.3cm]
  \begin{subfigure}[t]{1\textwidth}
    \centering
    \includegraphics[scale=0.55]{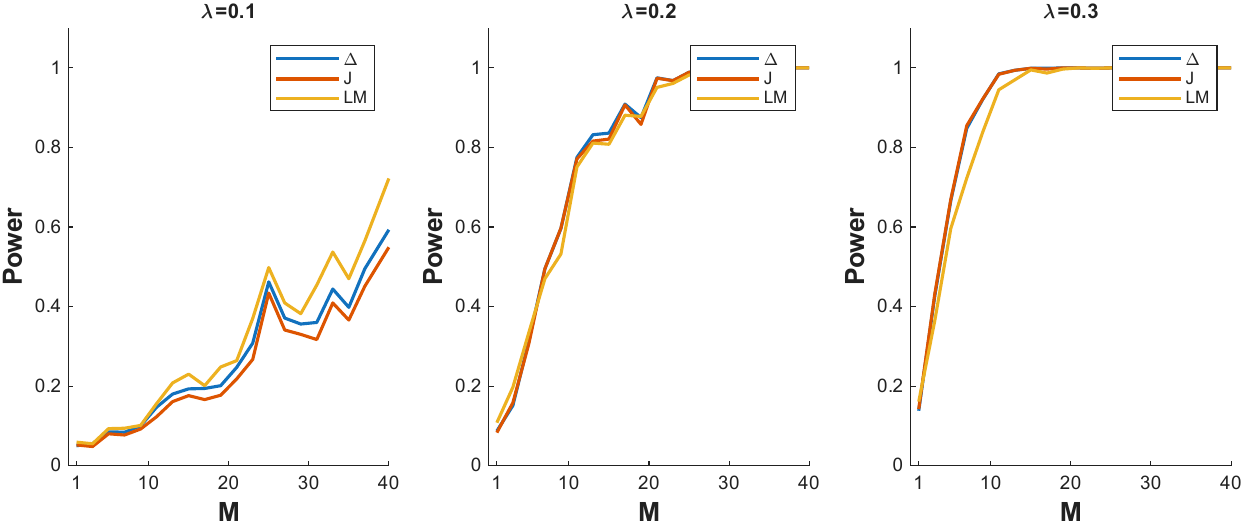}
  \end{subfigure}
    \textbf{\large Two Groups: $K=4$, $N=100$, $T=10$} \\[0.3cm]
    \begin{subfigure}[b]{1\textwidth}
    \centering
    \includegraphics[scale=0.55]{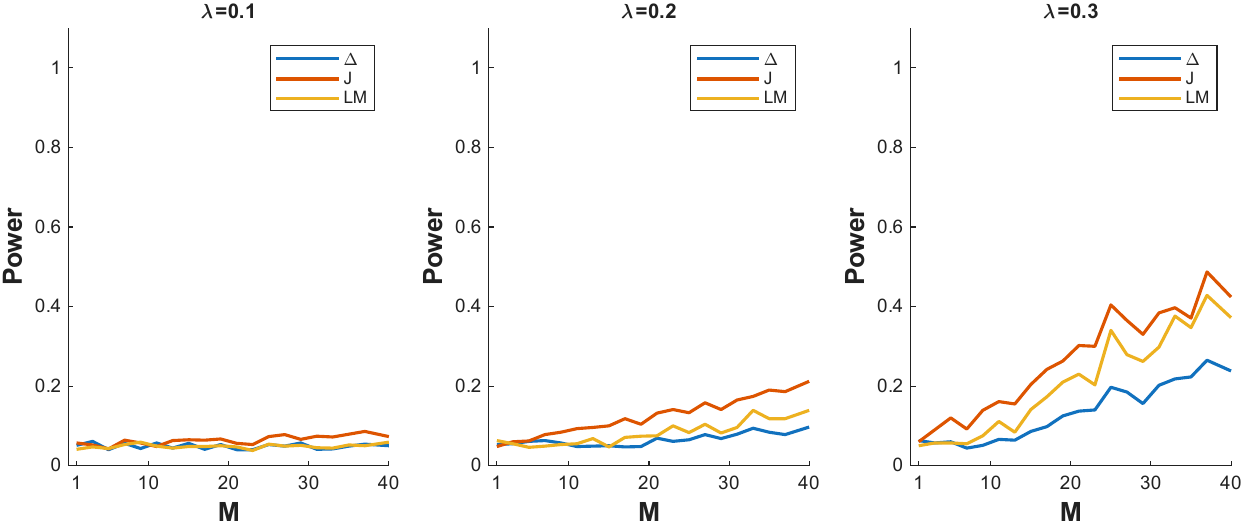}
  \end{subfigure}
  \textbf{\large Two Groups: $K=4$, $N=100$, $T=100$} \\[0.3cm]
  \begin{subfigure}[t]{1\textwidth}
    \centering
    \includegraphics[scale=0.55]{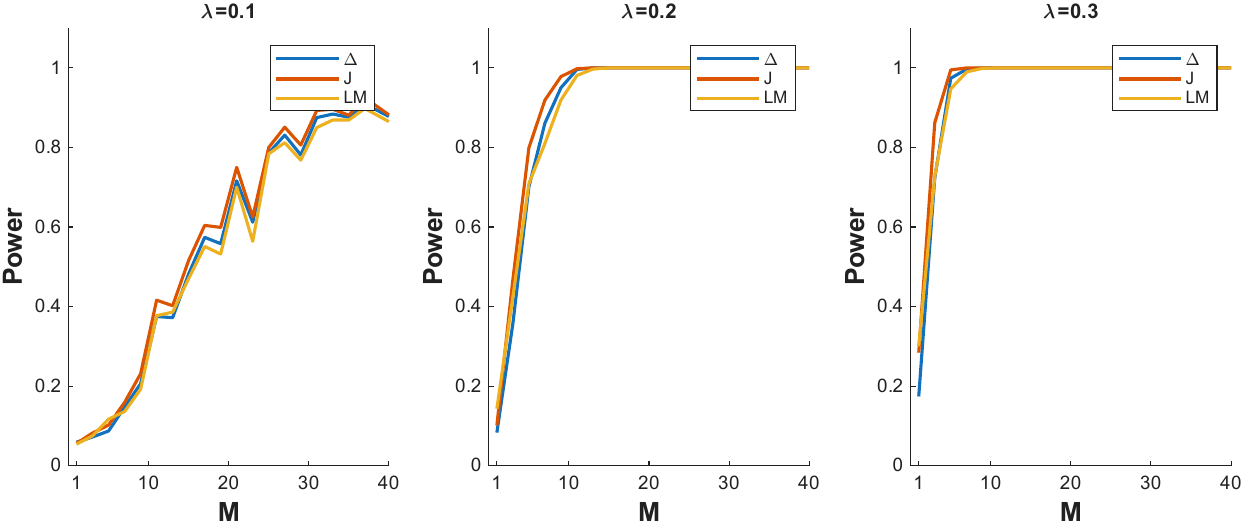}
  \end{subfigure}
  \caption{Power curves of slope homogeneity tests with two groups.}
  \label{fig:Figure1}
\end{figure}

\begin{figure}[tbp]
  \centering
  \textbf{\large $P$ Groups: $K=1$, $N=100$, $T=10$} \\[0.3cm]
  \begin{subfigure}[b]{1\textwidth}
    \centering
    \includegraphics[scale=0.55]{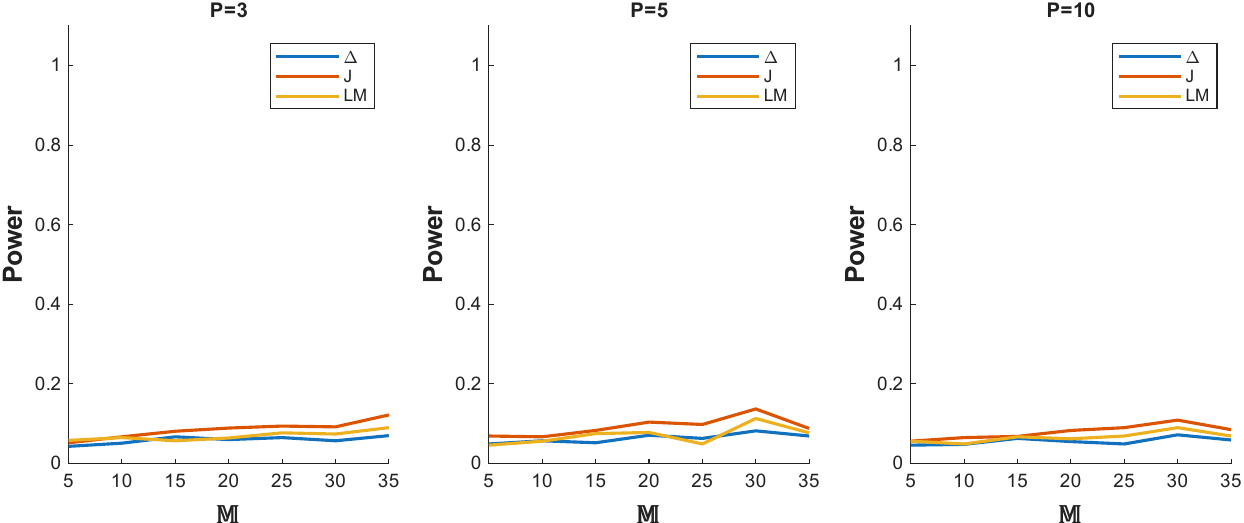}
  \end{subfigure}
  \textbf{\large $P$ Groups: $K=1$, $N=100$, $T=100$} \\[0.3cm]
  \begin{subfigure}[t]{1\textwidth}
    \centering
    \includegraphics[scale=0.55]{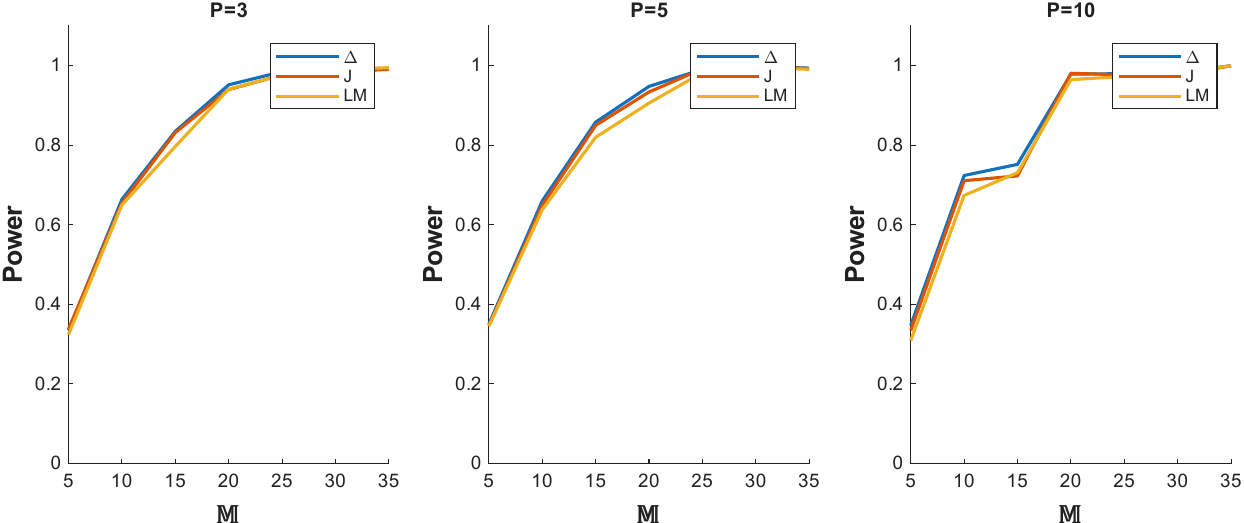}
  \end{subfigure}
    \textbf{\large $P$ Groups: $K=4$, $N=100$, $T=10$} \\[0.3cm]
    \begin{subfigure}[b]{1\textwidth}
    \centering
    \includegraphics[scale=0.55]{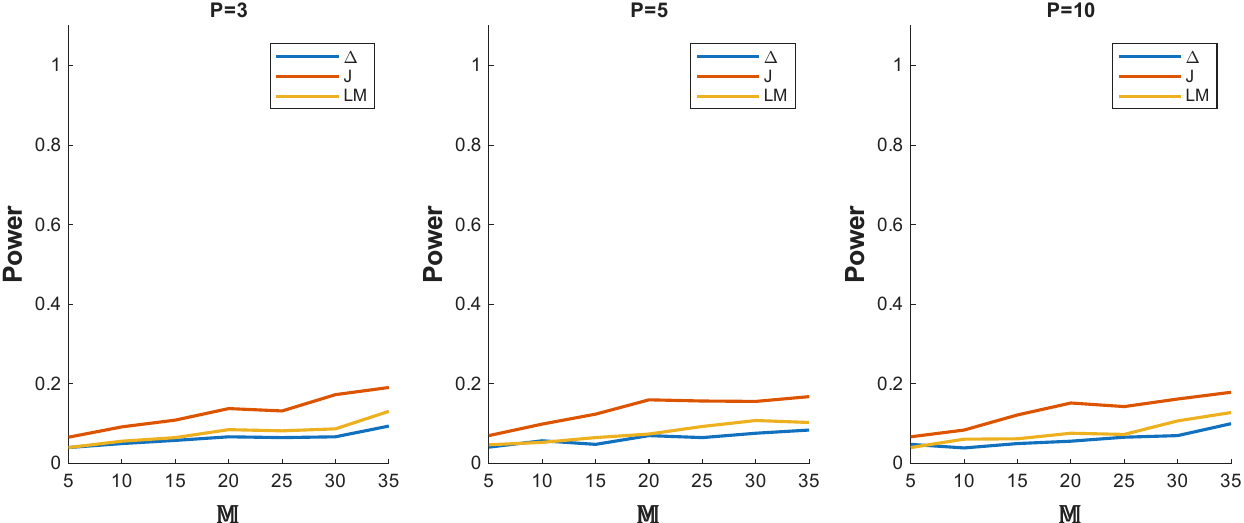}
  \end{subfigure}
  \textbf{\large $P$ Groups: $K=4$, $N=100$, $T=100$} \\[0.3cm]
  \begin{subfigure}[t]{1\textwidth}
    \centering
    \includegraphics[scale=0.55]{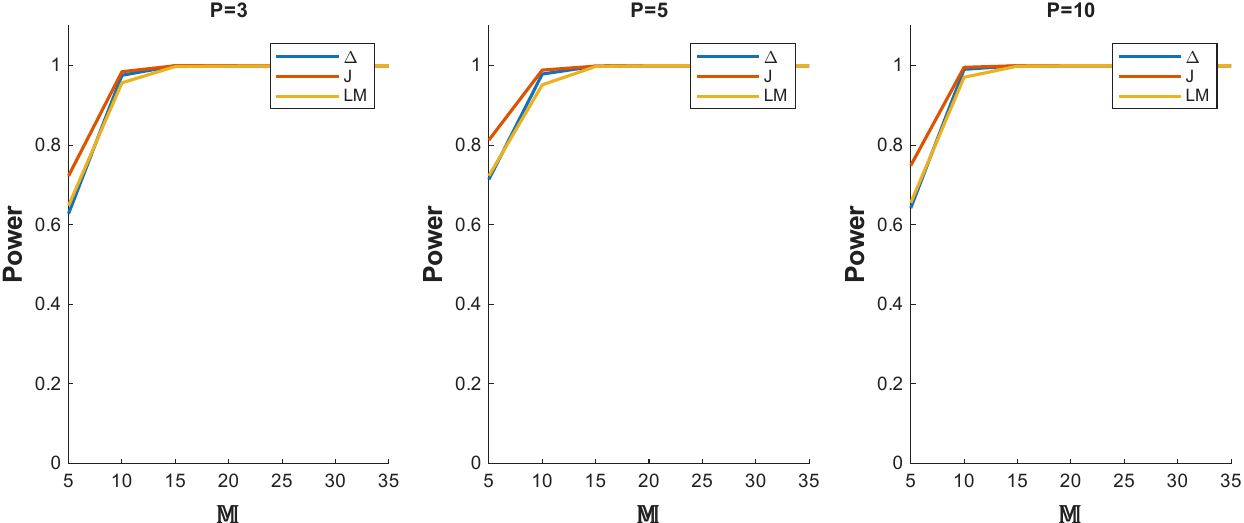}
  \end{subfigure}
  \caption{Power curves of slope homogeneity tests with two groups.}
  \label{fig:Figure2}
\end{figure}

All simulations report rejection frequencies at the 5\% significance level based on 1{,}000 Monte Carlo replications. Results for the two-group design are shown in Figure~\ref{fig:Figure1}, and those for the multiple-group design in Figure~\ref{fig:Figure2}.

The two-group results closely mirror our theoretical findings in Theorem~\ref{theorem:Tlarge}. Rejection rates increase monotonically with the alternative group size $M$ and with the slope deviation parameter $\lambda$. All tests tend to have a power close to trivial when $M$ is smaller than $\sqrt{N}$, but the power increases significantly after this threshold and as $T$ grows.

In the multiple-group experiment, the decisive role is played by the number of individuals in the alternative groups, $\mathbb{M}$. Tests exhibit substantial power when $\mathbb{M}$ grows at least proportionally to $\sqrt{N}$, consistent with Corollary~\ref{Cor}. Power declines slightly as $P$ increases, due to greater dispersion in group-specific slopes, but this effect is secondary once $\mathbb{M}$ or $T$ is sufficiently large.

In most cases, all three tests show comparable power. The $J$ and $LM$ tests, however, tend to gain power somewhat more rapidly than the $\Delta$ test, especially in the short-panel designs with $T=10$ and $K=4$. This divergence is much less visible once $T$ increases. When $T=100$, the three procedures behave in a way that is broadly consistent with the asymptotic comparison in Section~4.

\section{Conclusions}
We derive the detection boundary of standard slope-homogeneity tests under small-group heterogeneity. The power of the tests is evaluated under a sequence of doubly local alternatives, linking the degree of heterogeneity across groups and each group size to its impact on test performance. Our analysis shows that slope-homogeneity tests are effective in detecting such structures, including sparse groups whose cardinality is negligible relative to the cross-sectional size. Monte Carlo simulations support similar conclusions in finite samples.

\newpage

\phantomsection
\addcontentsline{toc}{section}{References}
\makeatletter
\makeatother
\bibliographystyle{chicago}

\bibliography{EHP.bib}

\newpage

\appendix
\begin{appendix}
\setcounter{equation}{0}
\renewcommand{\theequation}{A\arabic{equation}}
\renewcommand{\thetable}{A\arabic{table}}
\renewcommand{\thelemma}{\arabic{lemma}}
\setcounter{table}{0}
\section*{Appendix}

\section{Proofs}\label{sec:proofs}
We derive the asymptotic distribution of the three type of tests in Theorems 1 under $H_{1,n}$, in the large panel setting, where $N,T\to\infty$. The rates stated in Corollaries 1 and 2 (at the end of the Appendix) follow immediately by setting $M=\mathbb{M}$ and $T=1$, respectively. Additional remarks on the derivation of the noncentrality parameters in Corollary 1 are provided at the end of this section.

The proof is based on the decomposition of each test into a term leading the asymptotic distribution of the test, a term that drives the power of the test, and remainder terms, denoted as $R_{NT}$ and $C_{NT}$, that are asymptotically negligible under the conditions of the statements.

Under $H_{1,n}$, we can rewrite model \eqref{eq:regmodel} as
\begin{eqnarray*}
  y_{it} = x_{it}^{\prime}\beta_i + \epsilon_{it}= x_{it}^{\prime}\beta + x_{it}^{\prime}\lambda_i+\epsilon_{it}= x_{it}^{\prime}\beta + u_{it},
\end{eqnarray*}
where $u_{it}=x_{it}^{\prime}\lambda_i+\epsilon_{it}$, $\lambda_{i} =0$ if $i\in G_1$ and $\lambda_{i} =\lambda/\gamma$ if $ i\in G_2$.

\begin{lemma}\label{lem:Lemma1}
  Under $H_{1,n}$ and Assumptions \ref{Ass1}, \ref{Ass2},
  \begin{equation*}
    \hat{\beta}_{LS}-\beta= O_p\left(\frac{M}{N\gamma} + \frac{1}{\sqrt{NT}}\right), \, \mbox{as } N,T\to\infty.
  \end{equation*}
  \begin{proof}[Proof of Lemma 1]
    \begin{eqnarray*}
      \hat{\beta}_{LS}-\beta &=& \left(\sum_{i=1}^N\sum_{t=1}^Tx_{it}x_{it}'\right)^{-1}\sum_{i=1}^N\sum_{t=1}^Tx_{it}\epsilon_{it} + \left(\sum_{i=1}^N\sum_{t=1}^Tx_{it}x_{it}'\right)^{-1}\sum_{i=1}^N\sum_{t=1}^Tx_{it}x_{it}^{\prime}\lambda_i.
    \end{eqnarray*}
    From Assumption \ref{Ass2}(b) we have
    \begin{eqnarray*}
        \frac{1}{NT}\sum_{i=1}^N\sum_{t=1}^Tx_{it}x_{it}'& \overset{p}{\to}& S \text{ as } N,T\to\infty
    \end{eqnarray*}
    Further,
     $\frac{1}{NT}\sum_{i=1}^N\sum_{t=1}^Tx_{it}\epsilon_{it}=O_p\left((NT)^{-1/2}\right)$ as $N,T\to\infty$. Finally, from Assumption \ref{Ass2}(c)
    \begin{equation*}
      \sum_{i=1}^N\sum_{t=1}^Tx_{it}x_{it}^{\prime}\lambda_i = \frac{1}{\gamma}\sum_{i\in G_2}^N\sum_{t=1}^T {\lambda} x_{it}x_{it}^{\prime}=
        O_p\left(\frac{MT}{\gamma}\right),  \mbox{as } N,T\to\infty.
    \end{equation*}
  \end{proof}
\end{lemma}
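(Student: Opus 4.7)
The plan is to start from the closed form of the pooled OLS estimator and separate the contribution of the group-based drift from that of the idiosyncratic noise. Under $H_{1,n}$, writing $y_{it}=x_{it}'\beta+u_{it}$ with $u_{it}=x_{it}'\lambda_i+\epsilon_{it}$ gives the standard identity
\[
  \hat{\beta}_{LS}-\beta=\left(\sum_{i=1}^N x_i'x_i\right)^{-1}\sum_{i=1}^N x_i'\epsilon_i+\left(\sum_{i=1}^N x_i'x_i\right)^{-1}\sum_{i=1}^N x_i'x_i\,\lambda_i.
\]
So the task reduces to controlling (i) the inverse of the design sum, (ii) the sampling-noise term, and (iii) the drift term created by $\lambda_i$.

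For the denominator, I would invoke Assumption \ref{Ass2}(b)--(c) applied to $G=\mathcal{N}$: $(NT)^{-1}\sum_{i=1}^N x_i'x_i\overset{p}{\to}S$, with $S$ positive definite, so the inverse is $O_p((NT)^{-1})$ in operator norm. For the noise term, I would use Assumption \ref{Ass1}(a), which makes $x_{it}\epsilon_{it}$ a mean-zero collection that is uncorrelated across $(i,t)$ conditionally on $x_i$; together with the sixth-moment bound on $x_{it}$ and the conditional variance bound on $\epsilon_{it}$, a Chebyshev-style argument yields $\operatorname{Var}\bigl((NT)^{-1}\sum_{i,t}x_{it}\epsilon_{it}\bigr)=O((NT)^{-1})$, hence this block contributes $O_p((NT)^{-1/2})$.

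For the drift term, I would exploit that $\lambda_i\neq 0$ only on $G_2$ and equals $\lambda_2/\gamma$ there. Thus
\[
  \sum_{i=1}^N x_i'x_i\,\lambda_i=\frac{1}{\gamma}\left(\sum_{i\in G_2} x_i'x_i\right)\lambda_2,
\]
and Assumption \ref{Ass2}(c) applied to $G=G_2$ gives $(MT)^{-1}\sum_{i\in G_2}x_i'x_i=O_p(1)$, so the whole inner sum is $O_p(MT/\gamma)$. Multiplying by the $O_p((NT)^{-1})$ inverse gives an $O_p(M/(N\gamma))$ contribution. Summing the two orders produces the stated rate.

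The argument is essentially a bookkeeping exercise in orders of magnitude; the only mild care is in handling the regime $M/N\to 0$, where Assumption \ref{Ass2}(c) still applies because $\#G_2=M\to\infty$, so the $(MT)^{-1}$-normalized Gram matrix on $G_2$ remains bounded in probability. If instead $M$ were fixed, one would rely on Assumption \ref{Ass2}(a)--(b) unit-by-unit to obtain the same $O_p(M/(N\gamma))$ bound, so no real obstacle arises there either. I do not expect any genuinely hard step; the main thing to keep straight is to read off the two separate stochastic orders without prematurely combining them.
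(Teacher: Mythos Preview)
Your proposal is correct and follows exactly the same route as the paper: the identical decomposition into the noise and drift pieces, the same appeal to Assumption~\ref{Ass2}(b)--(c) for the design-matrix limit, the same $O_p((NT)^{-1/2})$ bound on the noise, and the same use of Assumption~\ref{Ass2}(c) on $G_2$ to get the $O_p(MT/\gamma)$ drift bound. If anything, you give slightly more justification than the paper does (the Chebyshev argument for the noise term and the remark on the $M/N\to 0$ regime), but the structure is identical.
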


\begin{proof}[Proof Theorem 1, BRS Test]\\
  We start by considering the $LM$ test of BRS under $H_{1,n}$. After substituting $\hat{\epsilon}_{it}=u_{it}-x_{it}^{\prime}(\hat\beta_{LS}-\beta)$, with $u_{it}=x_{it}^{\prime}\lambda_i+\epsilon_{it}$, into the score vector, we obtain the following decomposition,
  \begin{equation}\label{eq:decomposition}
    \hat{s}_k = \sum_{i=1}^{N}\sum\limits_{t=2}^{T}\sum\limits_{s=1}^{t-1}\epsilon_{it}\epsilon_{is}x_{it,k}x_{is,k} + \sum_{i=1}^{N}\sum\limits_{t=2}^{T}\sum\limits_{s=1}^{t-1}\lambda_i'{x}_{it}x_{it,k}x_{is,k}{x}_{is}'\lambda_i + R_{NT,k} + C_{NT,k},
  \end{equation}
  where the first two terms drive the asymptotic normality and the power of the test, respectively, while $R_{NT,k}$ and $C_{NT,k}$ are remainder terms given by,
  \begin{eqnarray*}
    R_{NT,k}&=& \sum_{i=1}^{N}\sum\limits_{t=2}^{T}\sum\limits_{s=1}^{t-1}(\hat\beta_{LS}-\beta)'x_{it}x_{it,k}x_{is,k}x_{is}'(\hat\beta_{LS}-\beta) - 2\sum_{i=1}^{N}\sum\limits_{t=2}^{T}\sum\limits_{s=1}^{t-1}(\hat\beta_{LS}-\beta)'x_{it}u_{is}x_{it,k}x_{is,k},\\
    C_{NT,k}&=& \sum_{i=1}^{N}\sum\limits_{t=2}^{T}\sum\limits_{s=1}^{t-1}\lambda_i'{x}_{it}\epsilon_{is}x_{it,k}x_{is,k}+ \sum_{i=1}^{N}\sum\limits_{t=2}^{T}\sum\limits_{s=1}^{t-1}\epsilon_{it}\lambda_i'{x}_{is}x_{it,k}x_{is,k},\\
  \end{eqnarray*}
  for $k=1,...,K$.
  \noindent After substituting $u_{it}=x_{it}^{\prime}\lambda_i+\epsilon_{it}$ into $R_{NT}$ and
  using Lemma~\ref{lem:Lemma1}, Lemma~A2 in BRS, together with the fact that $\lambda_i$ is non zero for only $M$ elements, we obtain,
  \begin{align*}
    \frac{1}{\sqrt{N}T}R_{NT,k} = O_p\left(\frac{M^2T}{\gamma^2{N}^{3/2}}\right).
  \end{align*}
  Furthermore, if $M/N\to  {m_0} >0 $ then the rate above reduces to $\frac{1}{\sqrt{N}T} R^{(k)}_{NT} = O_p\left(\frac{MT}{\gamma^2{N}^{1/2}}\right)$ and as $MT/\sqrt{N}\gamma^2\to c>0$ we have
  \begin{align*}
    \frac{1}{\sqrt{N}T}R_{NT,k} & = c^2m_0\lambda'\left(S^{(G_2)}S^{-1}\Omega_kS^{-1}S^{(G_2)} - 2S^{(G_2)}S^{-1}\Omega_{k}^{(G_2)}\right)\lambda + o_p(1).
  \end{align*}
  By the same arguments,
  \[
    \frac{1}{\sqrt{N}T}C_{NT,k} = O_p\left(\frac{\sqrt{MT}}{\sqrt{N}\gamma}\right),
  \]
  and if $MT/(\sqrt{N}\gamma^2)\to c\geq0$ then $\frac{\sqrt{MT}}{\sqrt{N}\gamma}\to 0$ resulting in $\frac{1}{\sqrt{N}T}C_{NT,k}=o_p(1)$.
  Further,
  \begin{align*}
  \frac{1}{\sqrt{N}T}\sum\limits_{i=1}^{N}\sum\limits_{t=2}^{T}\sum\limits_{s=1}^{t-1}\lambda_i'{x}_{it}x_{it,k}x_{is,k}{x}_{is}'\lambda_i &=\frac{1}{\gamma^2\sqrt{N}T} \sum_{i\in G_2}\sum\limits_{t=2}^{T}\sum\limits_{s=1}^{t-1}\lambda'{x}_{it}x_{it,k}x_{is,k}{x}_{is}'\lambda\\ &\overset{p}{\rightarrow}\lambda' \Omega_k^{(G_2)}\lambda\times \lim\limits_{N,T\to\infty} \frac{MT}{\sqrt{N}\gamma^2}.
  \end{align*}
  Let $s_k:=\sum_{i=1}^{N}\sum_{t=2}^{T}\sum_{s=1}^{t-1}\epsilon_{it}\epsilon_{is}x_{it,k}x_{is,k}$ and $s=[s_1,...,s_K]'$ then by Theorems 5.1 and Corollary 5.2 in BRS,
  \begin{eqnarray*}
    \frac{1}{\sqrt{N}T}s \overset{d}{\rightarrow}N(0,\Psi)  \text{ as } N,T\to\infty.
  \end{eqnarray*}
  Finally, following Theorem 6.2 in BRS, we can write each $(k,l)$ element of $\hat{V}$ as,
  \begin{eqnarray*}
      \frac{1}{NT^2}\hat{V}_{k,l} &= \frac{1}{NT^2} \sum\limits_{i=1}^N\sum\limits_{t=2}^T\epsilon_{it}^2x_{it,k}x_{it,l}\left(\sum\limits_{s=1}^{t-1}\epsilon_{is}x_{is,k}\right)\left(\sum\limits_{p=1}^{t-1}\epsilon_{ip}x_{ip,k}\right) + O_p\left(\frac{MT}{\gamma^2N} \right)\\
      &= \Psi_{k,l} + O_p\left(\frac{MT}{\gamma^2N} \right).
  \end{eqnarray*}
  Putting all these results together into the expression for the LM test \eqref{eq:LMtest} yields the first statement of the theorem.

\end{proof}

\

\begin{proof}[Proof Theorem 1, PY Test]\\
We now use the same procedure to prove the statement for the PY test. Under $H_{1,n}$, we have that,
  \[
    \left(\hat{\beta}_{i}-\hat{\beta}_{WLS}(\tilde{\sigma}_i)\right) = \left[\lambda_i + \left(\frac{x_i'x_i}{\tilde{\sigma}^2_i}\right)^{-1}\left(\frac{x_i'\epsilon_i}{\tilde{\sigma}^2_i}\right) - \left(\sum\limits_{i=1}^N\frac{x_i'x_i}{\tilde{\sigma}^2_i}\right)^{-1}\left(\sum\limits_{i=1}^N\frac{x_i'u_i}{\tilde{\sigma}^2_i}\right)\right]
  \]
  where $u_i=(u_{i1},...,u_{iT})' = x_i\lambda_i + \epsilon_i$. Thus, rewrite $S_{PY}$ as,
  \[
    S_{PY} = \sum_{i=1}^N \frac{\epsilon_i'x_i(x_i'x_i)^{-1}x_i'\epsilon_i}{\tilde{\sigma}^2_i} + \sum\limits_{i=1}^N\frac{\lambda_i'x_i'x_i\lambda_i}{\tilde{\sigma}^2_i} + R_{NT} + C_{NT},
  \]
  where,
  \begin{eqnarray*}
    R_{NT} &=& - \left(\sum\limits_{i=1}^N\frac{\lambda_i'x_i'x_i}{\tilde{\sigma}^2_i}\right)\left(\sum\limits_{i=1}^N\frac{x_i'x_i}{\tilde{\sigma}^2_i}\right)^{-1}\left(\sum\limits_{i=1}^N\frac{x_i'x_i\lambda_i}{\tilde{\sigma}^2_i}\right) - \left(\sum\limits_{i=1}^N\frac{\epsilon_i'x_i}{\tilde{\sigma}^2_i}\right)\left(\sum\limits_{i=1}^N\frac{x_i'x_i}{\tilde{\sigma}^2_i}\right)^{-1}\left(\sum\limits_{i=1}^N\frac{x_i'\epsilon_i}{\tilde{\sigma}^2_i}\right),\\
    C_{NT} &=& 2\sum_{i=1}^N \frac{\epsilon_i'x_i\lambda_i}{\tilde{\sigma}^2_i} - 2\left(\sum\limits_{i=1}^N\frac{\epsilon_i'x_i}{\tilde{\sigma}^2_i}\right)\left(\sum\limits_{i=1}^N\frac{x_i'x_i}{\tilde{\sigma}^2_i}\right)^{-1}\left(\sum\limits_{i=1}^N\frac{x_i'x_i\lambda_i}{\tilde{\sigma}^2_i}\right).
  \end{eqnarray*}
  Using that $(NT)^{-1}\sum_{i=1}^N(x_i'x_i/\tilde{\sigma}^2_i)=O_p(1 + T^{-1/2})$ and $(NT)^{-1/2}\sum_{i=1}^N(x_i'\epsilon_i/\tilde{\sigma}^2_i) = O_p(1 + T^{-1/2})$ (see equations A.10, A.11, and Lemma 3 in PY), it follows that,
  \begin{align*}
    \frac{1}{\sqrt{N}}R_{NT} & = O_p\left(\frac{M^2T}{\gamma^2{N}^{3/2}}\right).
  \end{align*}
  Furthermore, if $M/N\to  {m_0} >0 $ then the rate above reduces to $\frac{MT}{\gamma^2{N}^{1/2}}$, as $M/N<1$, and if $MT/\sqrt{N}\gamma^2\to c>0$ we have
  \begin{equation*}
    \frac{1}{\sqrt{N}}R_{NT} = -(c)^2m_0\lambda'Q^{(G_2)}Q^{-1}Q^{(G_2)}\lambda.
  \end{equation*}
  By the same arguments
  \begin{align*}
    \frac{1}{\sqrt{N}}C_{NT} & = O_p\left(\frac{\sqrt{MT}}{\gamma\sqrt{N}} + \frac{M\sqrt{T}}{\gamma N}\right),
  \end{align*}
  and,
  \[
    \frac{1}{\sqrt{N}}\sum\limits_{i=1}^N\frac{\lambda_i'x_i'x_i\lambda_i}{\tilde{\sigma}^2_i} = \frac{MT}{\sqrt{N}\gamma^2}\lambda'Q^{(G_2)}\lambda + O_p\left(\frac{M\sqrt{T}}{\sqrt{N}\gamma^2}\right).
  \]

  Finally, by the same arguments of Theorem 1 and 2 in PY (see appendix A.2, A.3 of PY),
  \[
    \frac{1}{\sqrt{N}}\sum_{i=1}^N \frac{\epsilon_i'x_i(x_i'x_i)^{-1}x_i'\epsilon_i}{\tilde{\sigma}^2_i} = \frac{1}{\sqrt{N}}\sum\limits_{i=1}^N \tilde{z}_{i} + O_p\left(\frac{1}{\sqrt{T}}\right) + O_p\left(\frac{1}{\sqrt{N}}\right),
  \]
  where $\tilde{z}_{i}=(T-1)\epsilon_i'x_i(x_i'x_i)^{-1}x_i'\epsilon_i/(\epsilon_i'\epsilon_i)$ with $\E{\tilde{z}_{i}} = K + O(T^{-2})$ and $\Var{\tilde{z}_{i}} = 2K + O(T^{-1})$ uniformly over $i$. Then putting all these rates together into expression \eqref{eq:delta} for PY test yields
  \begin{align*}
    \Delta & = \sqrt{N}\left(\frac{N^{-1}S_{PY} - K}{\sqrt{2K}}\right) = \frac{1}{\sqrt{N}}\sum\limits_{i=1}^N\left(\frac{\tilde{z}_{i}^* - K}{\sqrt{2K}}\right) + O\left(\frac{\sqrt{N}}{T^2}\right)    \\
             & + \frac{MT}{\sqrt{N}\gamma^2}\lambda'\left(\frac{1}{MT}\sum_{i\in G_2}\frac{x_i'x_i}{\sigma_i^2}\right)\lambda\frac{1}{\sqrt{2K}} + \frac{MT}{\sqrt{N}\gamma^2}O_p\left(\frac{M}{N}\right)\frac{1}{\sqrt{2K}} + o_p(1),
  \end{align*}
  where $N^{-1/2}\sum_{i=1}^N(\tilde{z}_{i}-K)/\sqrt{2K}\xrightarrow{d}N(0,1)$ (appendix A.4 of PY).

  The statement in Theorem \ref{theorem:Tlarge} follows immediately by considering different scenarios for the rate $\frac{MT}{\sqrt{N}\gamma^2}$.
\end{proof}

\

\begin{proof}[Proof Theorem 1, SC Test]\\
After substituting $\hat{\epsilon}_{i} = \epsilon_i + x_i\lambda_i - x_i(\hat{\beta}_{LS}-\beta)$ into $LM_{SC}-\sqrt{N}\hat{B}_{NT}$ we obtain,
\begin{equation*}
  LM_{SC}-\sqrt{N}\hat{B}_{NT} = \sum_{i=1}^N\epsilon_i'(h_i-\diag\{h_i\})\epsilon_i + \sum_{i=1}^N\lambda_i'x_i'(h_i-\diag\{h_i\})x_i\lambda_i + R_{NT} + C_{NT}
\end{equation*}
where $\diag\{A\}$ indicates the square diagonal matrix with main elements given by the diagonal of $A$, and
\begin{align*}
    R_{NT} &= \sum_{i=1}^N(\hat{\beta}_{LS}-\beta)'x_i'(h_i-\diag\{h_i\})x_i(\hat{\beta}_{LS}-\beta) - 2\sum_{i=1}^N\lambda_i'x_i'(h_i-\diag\{h_i\})x_i(\hat{\beta}_{LS}-\beta),\\
    C_{NT} &= 2\sum_{i=1}^N\lambda_i'x_i'(h_i-\diag\{h_i\})\epsilon_i - 2\sum_{i=1}^N\epsilon_i'(h_i-\diag\{h_i\})x_i(\hat{\beta}_{LS}-\beta).
\end{align*}
Using identical arguments to the ones used for BRS test we have,
\begin{align*}
    \frac{1}{\sqrt{N}}R_{NT} &= O_p\left(\frac{M^2T}{{N}^{3/2}\gamma^2}\right),
\end{align*}
where we explicitly used the fact that $h_i$ is a projection matrix, hence $0\leq h_{i,tt}\leq1$, and
\[
    \sum_{i=1}^N\lVert x_i'\left(h_i-\diag\{h_i\}\right)x_i\rVert_2 \leq \sum_{i=1}^N\sum_{t=1}^T \lVert (x_{it}x_{it}')(1-h_{i,tt})\rVert_2 \leq  \sum_{i=1}^N\sum_{t=1}^T\lVert x_{it}x_{it}\rVert_2 = O_p(NT).
\]
Here $\lVert A \rVert_2 = \sqrt{\sum_{i=1}^k\sum_{j=1}^k|a_{ij}|^2}$ denotes the Frobenius norm of a $k\times k$ matrix $A$.
When $M/N\to  {m_0} >0 $ and $MT/\sqrt{N}\gamma^2\to c>0$ then
\[
    \frac{1}{\sqrt{N}}R_{NT} = (c)^2m_0\lambda'\left(S^{(G_2)}S^{-1}WS^{-1}S^{(G_2)}- W^{(G_2)}S^{-1}S^{(G_2)} - S^{(G_2)}S^{-1}W^{(G_2)}\right)\lambda + o_p(1).
\]
Similarly,
\begin{align*}
    \frac{1}{\sqrt{N}}C_{NT} &= O_p\left(\frac{\sqrt{MT}}{\sqrt{N}\gamma^2}\right),
\end{align*}
and if $MT/(\sqrt{N}\gamma^2)\to c\geq0$ then $\frac{1}{\sqrt{N}}C_{NT}=o_p(1)$.
Further,
\[
  \frac{1}{\sqrt{N}}\sum_{i=1}^N\lambda_i'x_i'(h_i-\diag\{h_i\})x_i\lambda_i \xrightarrow{p} \lambda'W^{(G_2)}\lambda\times \lim_{N,T\to\infty}\frac{MT}{\sqrt{N}\gamma^2}.
\]
Finally, applying Theorem 5.1 in BRS, under $H_{1,n}$, we obtain $\hat{V}_{NT}=V_0 + o_p(1)$, and
\[
  \sum_{i=1}^N\epsilon_i'(h_i-\diag\{h_i\})\epsilon_i \to N(0,V_0).
\]
Putting all of the obtained rates into expression \eqref{eq:SC} of SC test gives us the non-centrality parter and the statement of the theorem.
\end{proof}

\

\begin{proof}[Remarks Corollary 1]
The term $C_{NT}$ in all decompositions can be neglected under the conditions of Theorem 1 with $M=\mathbb{M}$. We show the passages to obtain $\delta_{PY}$; the remarks for the other two tests are similar and thus omitted.

Let $u_{it}=x_{it}'\lambda_i + \epsilon_{it}$ where $\lambda_i=\sum_{p=2}^{P}\lambda_p\mathbb{I}\{g(i)=p\}$, with $\mathbb{I}\{A\}$ being the indicator of event $A$. Then,
\[
  \begin{aligned}
    \frac{1}{\sqrt{N}}\sum\limits_{i=1}^N\frac{\lambda_i'x_i'x_i\lambda_i}{\tilde{\sigma}^2_i} &= \frac{1}{\sqrt{N}\gamma^2} \sum_{p=2}^{P}\lambda_p'\left(\sum_{i\in G_p}\frac{x_i'x_i}{\tilde{\sigma}^2_i}\right)\lambda_p = \sum_{p=2}^{P}O_p\left(\frac{M_pT}{\sqrt{N}\gamma^2}\right)=O_p\left(\frac{\mathbb{M}T}{\sqrt{N}\gamma^2}\right),
  \end{aligned}
\]
with
\[
     \plim_{N,T\to\infty}\frac{1}{\sqrt{N}\gamma^2} \sum_{p=2}^{P}\lambda_p'\left(\sum_{i\in G_p}\frac{x_i'x_i}{\tilde{\sigma}^2_i}\right)\lambda_p = \lim_{N,T\to \infty} \sum_{p=2}^{P}\lambda_p' Q^{(G_p)}\lambda_p \left(\frac{M_pT}{\sqrt{N}\gamma^2}\right) + o_p(1).
\]
%\textcolor{red}{Strictly speaking, a requirement for the above to be true when $P\to\infty$ is $(M_pT)^{-1}\sum_{i\in G_p}x_i'x_i/\tilde{\sigma}^2_i = S^{(G_p)} + o_p\left(P^{-1}\right)$. If this condition fails, however, the noncentrality parameter may have a different limit.}.
Using the same arguments in $R_{NT}$, we obtain,
\[
  \begin{aligned}
    \frac{1}{\sqrt{N}}R_{NT} &= -\frac{1}{\sqrt{N}\gamma^2} \left(\sum_{p=2}^{P}\lambda_p'\left( \sum_{i\in G_p}\frac{x_i'x_i}{\tilde{\sigma}^2_i}\right)\right)\frac{Q^{-1}}{NT}\left(\sum_{q=2}^{P}\lambda_p\left( \sum_{i\in G_q}\frac{x_i'x_i}{\tilde{\sigma}^2_i}\right)\right) + o_p(1)\\
    &= \sum_{p=2}^{P}\sum_{q=2}^{P}\lambda_q'Q^{(G_p)}Q^{-1}Q^{(G_q)}\lambda_q \left(\frac{M_p}{N}\right)\left(\frac{M_pT}{\sqrt{N}\gamma^2}\right) + o_p(1)
  \end{aligned}
\]
\end{proof}

\section{Formal Statement Fixed $T$}
\begin{assumption}[Regressors]\label{Ass3}
  \begin{itemize}
    \item[(a)] $\mathbb{E}|x_{it,k}|^{4+\delta}<C<\infty $, for some $\delta>0$, and all $i=1,...,N$, $t=1,...,T$, and $k=1,...,K$.
    \item[(b)]  For any subset $G\subset\{1,...N\}$ such that $\#G\to\infty$, the probability limits $\Sigma=\plim_{N\to\infty}N^{-1}\sum_{i=1}^N x_{i}'x_{i}$ and $\Sigma^{(G)}=\plim_{N\to\infty}(\#G)^{-1}\sum_{i\in G}x_{i}'x_{i}$ are positive definite, non-stochastic and finite matrices.
  \end{itemize}
\end{assumption}
Using the proof of Theorem~\ref{theorem:Tlarge}, we obtain the following result:
\setcounter{theorem}{1}
\begin{corollary} \label{Cor1}
  Under Assumptions \ref{Ass1}, \ref{Ass2}, and $H_{1,n}$, as $M,N\to\infty$, the following holds:
  \begin{equation*}
                LM\overset{d}{\rightarrow }\begin{cases}
                  \chi _{K}^{2},                & \mbox{if } \frac{\sqrt{M}}{\gamma  N^{1/4}} \to 0,      \\
                  \chi _{K}^{2}\left(\delta_{LM}'\mathcal{V}^{-1}\delta_{LM} \right), & \mbox{if }  \frac{\sqrt{M}}{\gamma N^{1/4}} \to c>0     \\
                  \infty,                       & \mbox{if }  \frac{\sqrt{M}}{\gamma  N^{1/4}} \to \infty.
                \end{cases},
              \end{equation*}
              where $\delta_{LM}$ is a $K\times1$ vector with $k$-th element
              \[
                \delta_{LM,k} = c^2\left(\lambda_2^{\prime}U_k^{(G_2)}\lambda_2 + m_0\lambda_2'\left(\Sigma^{(G_2)}\Sigma^{-1}U_k\Sigma^{-1}\Sigma^{(G_2)} - 2\Sigma^{(G_2)}\Sigma^{-1}U_{k}^{(G_2)}\right)\lambda_2\right),
              \]
              with
              \begin{align*}
                U_{k}^{(G)} = \underset{\#G \rightarrow \infty}{\text{plim}} \; (\#G)^{-1} \sum_{i\in G} \sum_{s < t} x_{it,k}x_{it} x_{is,k}x_{is}',
              \end{align*}
              and $U_k=U_k^{(G_1\cup G_2)}$. Here, $\mathcal{V}$ is a $K \times K$ matrix with $\left(k,l\right)$-th element given by $\mathcal{V} _{k,l} = \underset{N \rightarrow \infty}{\text{plim}} \; N^{-1} \sum_{i=1}^{N} \sum_{s<t}\sigma_i^4 x_{it,k}x_{it,l} x_{is,k}x_{is,l}$.
\end{corollary}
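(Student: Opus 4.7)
The plan is to mirror the BRS-test portion of the proof of Theorem~\ref{theorem:Tlarge}, replacing the joint limit $(N,T\to\infty)$ by a single large-$N$ limit with $T$ fixed and replacing Assumption~\ref{Ass2} by Assumption~\ref{Ass3}. The observation behind the authors' remark that Corollary~\ref{Cor1} follows by ``setting $T=1$'' is that once $T$ is fixed, every occurrence of $T$ in the rates of Theorem~\ref{theorem:Tlarge} is a bounded constant, so the boundary $\sqrt{MT}/(\gamma N^{1/4})$ collapses to $\sqrt{M}/(\gamma N^{1/4})$, and the fixed-$T$ limits $\Sigma$, $\Sigma^{(G)}$, $U_k^{(G)}$ of Assumption~\ref{Ass3} play exactly the role that $S$, $S^{(G)}$, $\Omega_k^{(G)}$ played in the large-$T$ setting (without the $T^{-1}$ or $T^{-2}$ averaging factors that are no longer needed).

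Concretely, I would first restate Lemma~\ref{lem:Lemma1} for fixed $T$: under Assumption~\ref{Ass3}(b), $N^{-1}\sum_{i=1}^{N} x_{i}'x_{i} \xrightarrow{p}\Sigma$, while $N^{-1/2}\sum_i x_i'\epsilon_i = O_p(1)$ by a CLT across $i$ under Assumption~\ref{Ass1}, giving $\hat\beta_{LS}-\beta = O_p\bigl(M/(N\gamma)+N^{-1/2}\bigr)$. I then decompose the score $\hat s_k$ exactly as in~\eqref{eq:decomposition}, but divide by $\sqrt N$ in place of $\sqrt N\,T$. The drift term converges in probability:
$$\frac{1}{\sqrt N \gamma^{2}}\sum_{i\in G_{2}}\sum_{t=2}^{T}\sum_{s=1}^{t-1}\lambda_{2}'x_{it}x_{it,k}x_{is,k}x_{is}'\lambda_{2}\;\xrightarrow{p}\;\lambda_{2}'U_{k}^{(G_{2})}\lambda_{2}\times\lim\frac{M}{\sqrt N\,\gamma^{2}},$$
by Assumption~\ref{Ass3}(b) applied to the subsample $G_2$. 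The remainder bounds
$$\frac{1}{\sqrt N}R_{NT,k}=O_{p}\!\left(\frac{M^{2}}{N^{3/2}\gamma^{2}}\right),\qquad \frac{1}{\sqrt N}C_{NT,k}=O_{p}\!\left(\frac{\sqrt M}{\sqrt N\,\gamma}\right),$$
follow from precisely the same algebra used in Theorem~\ref{theorem:Tlarge} (with the Frobenius-norm bound $\sum_i\|x_i'x_i\|_2 = O_p(N)$ under Assumption~\ref{Ass3}(a)); both are $o_p(1)$ on the boundary $M/(\sqrt N\gamma^{2})\to c$, and $R_{NT,k}/\sqrt N$ supplies the cross term in $\delta_{LM,k}$ when $m_0>0$. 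Consistency $N^{-1}\hat V_{kl}\xrightarrow{p}\mathcal V_{kl}$ is obtained by substituting $\hat\epsilon_{it}=\epsilon_{it}+x_{it}'\lambda_i-x_{it}'(\hat\beta_{LS}-\beta)$ and showing the contamination vanishes at the boundary rate.

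The single substantive departure from Theorem~\ref{theorem:Tlarge} is the CLT for the leading term $N^{-1/2}\sum_{i=1}^{N} z_{i,k}$ with $z_{i,k}=\sum_{s<t}\epsilon_{it}\epsilon_{is}x_{it,k}x_{is,k}$. With $T$ diverging, BRS establish this via a martingale CLT in $t$; with $T$ fixed, by contrast, the $z_{i,k}$ are i.i.d.\ across $i$ under Assumption~\ref{Ass1}, each a bilinear functional in $\epsilon_i$ and $x_i$ with $T(T-1)/2$ summands, so a Lindeberg--L\'evy CLT applies as soon as the variance is finite. This finiteness is the main obstacle to verify carefully: the $4+\delta$ moment bound on the regressors in Assumption~\ref{Ass3}(a) combined with Assumption~\ref{Ass1}(b) yields $\mathbb{E} z_{i,k}^{2}<\infty$ by Cauchy--Schwarz, and joint normality of the $K$-vector follows by the Cram\'er--Wold device with limit covariance $\mathcal V$. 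Assembling the three pieces gives $LM=\hat s'\hat V^{-1}\hat s\xrightarrow{d}\chi^{2}_{K}(\delta_{LM}'\mathcal V^{-1}\delta_{LM})$ on the boundary, $\chi^{2}_{K}$ below it, and divergence above, as claimed.
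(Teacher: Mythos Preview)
Your proposal is correct and follows exactly the route the paper takes: the paper provides no separate argument for Corollary~\ref{Cor1} beyond the remark that the rates ``follow immediately by setting $T=1$'' in the Theorem~\ref{theorem:Tlarge} proof and invoking the fixed-$T$ regressor conditions of Assumption~\ref{Ass3}. Your explicit observation that the CLT for $N^{-1/2}\sum_i z_{i,k}$ switches from a martingale argument in $t$ to an i.i.d.\ Lindeberg--L\'evy CLT across $i$ is a useful elaboration of what the paper leaves implicit in its reference to BRS's Theorem~5.1 and Corollary~5.2, which already cover both regimes.
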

\end{appendix}

\end{document}